\newif\iflong
\newif\ifshort
\title{On Minimizing Wiggle in Stacked Area Charts} 
\titlerunning{On Minimizing Wiggle in Stacked Area Charts}
\author{Alexander Dobler}{TU Wien, Austria}{adobler@ac.tuwien.ac.at}{0000-0002-0712-9726}{Supported by the Vienna Science and Technology Fund (WWTF) under grant [10.47379/ICT19035]}
\author{Martin Nöllenburg}{TU Wien, Austria}{noellenburg@ac.tuwien.ac.at}{0000-0003-0454-3937}{Supported by the Vienna Science and Technology Fund (WWTF) under grant [10.47379/ICT19035]}
\authorrunning{A.\ Dobler and M.\ Nöllenburg}
\keywords{Stacked area charts, NP-hardness, Mixed-integer linear programming} %
\crefname{algocf}{Algorithm}{Algorithms}
\newcommand{\probname}[1]{{\normalfont\textsc{#1}}}
\newcommand{\pwigglemin}{\probname{$p$-WiggleMin}}
\newcommand{\wpwigglemin}{\probname{Weighted-$p$-WiggleMin}}
\newcommand{\onewigglemin}{\probname{$1$-WiggleMin}}
\newcommand{\wonewigglemin}{\probname{Weighted-$1$-WiggleMin}}
\newcommand{\prefsumsumprob}{\probname{Min-$\sum$\textbar Prefixsum\textbar}}
\newcommand{\oneintwopartition}{\probname{OneInTwoPartition}}
\newcommand{\minlinarr}{\probname{MLA}}
\newcommand{\minlinarrlong}{\probname{Minimum Linear Arrangement}}
\newcommand{\mylipgray}[1]{\textcolor{lipicsGray}{\sffamily\bfseries\upshape\mathversion{bold}#1}}
\newcommand{\MILP}{\mylipgray{WiggleMILP}}
\newcommand{\Upwards}{\mylipgray{UpwardsOpt}}
\newcommand{\bestfirst}{\mylipgray{BestFirst}}
\theoremstyle{definition}
\newtheorem{problem}{Problem}
\newcommand{\appsymb}{$\bigstar$}
\newcommand{\appref}[1]{\hyperref[proof:#1]{\appsymb}}
\newcommand{\toappendix}[1]{%
  \iflong{}#1\else{}
    \gappto{\appendixText}
    {
        #1
      }
  \fi{}%
}
\newcommand{\appendixproof}[2]{%
  \iflong{}#2\else{}\gappto{\appendixText}
    {
      \subsection{\texorpdfstring{Proof of \cref{#1}}{}}\label{proof:#1}
      #2
    }
  \fi{}
}
\@nx\else[{#1}]\fi}%
\begin{document}

\maketitle
\begin{abstract}
    Stacked area charts are a widely used visualization technique for numerical time series. The $x$-axis represents time, and the time series are displayed as horizontal, variable-height layers stacked on top of each other. The height of each layer corresponds to the time series values at each time point. The main aesthetic criterion for optimizing the readability of stacked area charts is the amount of vertical change of the borders between the time series in the visualization, called \emph{wiggle}. While many heuristic algorithms have been developed to minimize wiggle, the computational complexity of minimizing wiggle has not been formally analyzed. In this paper, we show that different variants of wiggle minimization are \NP-hard and even hard to approximate. We also present an exact mixed-integer linear programming formulation and compare its performance with a state-of-the-art heuristic in an experimental evaluation.
    Lastly, we consider a special case of wiggle minimization that corresponds to the fundamentally interesting and natural problem of ordering a set of numbers as to minimize their sum of absolute prefix sums. We show several complexity results for this problem that imply some of the mentioned hardness results for wiggle minimization.
\end{abstract}
\section{Introduction}
\begin{figure}[tb]
    \centering
    \includegraphics{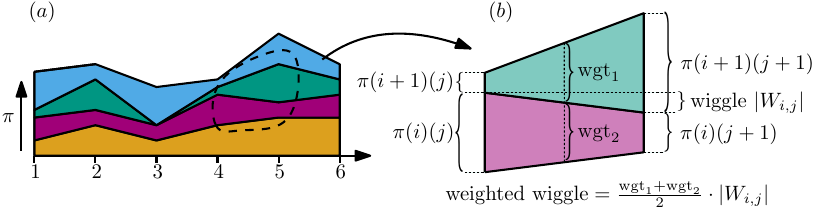}
    \caption{(a) A stacked area chart of four time series on six time points. The time series are ordered according to $\pi$ from bottom to top. (b) Illustration of the wiggle between time points $4$ and $5$ on the border between the green and violet time series. For the weighted wiggle, $|W_{i,j}|$ is multiplied by the average of $\pi(i)(j)$, $\pi(i)(j+1)$, $\pi(i+1)(j)$, and $\pi(i+1)(j+1)$. The total (weighted) wiggle is the sum of all (weighted) wiggle values.}
    \label{fig:stackedareachart}
\end{figure}
Stacked area charts are a widely used method for visualizing numerical time series across discrete time points, such as population data of countries over time or movie revenues over multiple weeks \cite{huggett_multiple_1990,byronStackedGraphsGeometry2008}. In stacked area charts, the $x$-dimension depicts time, and the time series are stacked as variable-height strips on top of each other without gaps, such that their heights depict their values (see \cref{fig:stackedareachart}a). The lowest border of the stacked time series is a straight horizontal line. 

A primary aesthetic criterion for stacked area charts is \emph{wiggle} -- the aggregated amount of vertical change of the borders between the time series in the visualization (\cref{fig:stackedareachart}b). Further, the wiggle is usually weighted by the time series values, resulting in \emph{weighted wiggle} (for a formal definition refer to \cref{section:prelims}).
For a stacked area chart of a specific dataset, the amount of wiggle solely depends on the vertical ordering of the time series. The task of minimizing wiggle has already been tackled with heuristic methods \cite{byronStackedGraphsGeometry2008,strungemathiesenAestheticsOrderingStacked2021}. But despite the popularity of stacked area charts in mainstream visualizations, no work was done with regard to the computational complexity of minimizing wiggle. This paper fills the gap and presents several complexity lower bounds for minimizing wiggle and weighted wiggle. We also compare a state-of-the-art heuristic for minimizing weighted wiggle with a new mixed-integer linear programming formulation on a set of real-world data; the evaluation shows that the heuristic performs well with regard to wiggle minimization and that the mixed-integer program can solve small to medium-size instances exactly.
Further, a special case of minimizing wiggle is discussed that results in a fundamentally interesting problem -- ordering a set of numbers such that the accumulation of all absolute prefix sums is minimized. Despite its natural problem definition, the problem has not been studied yet.
Its relevance comes from its relation to several ordering problems that minimize accumulated cost, such as scheduling problems. Results for this problem are also used to show some hardness results for minimizing wiggles.

\subparagraph*{Related work.}
The origin of stacked area charts is unclear, as they are a very natural and frequently used visualization for a commonly occurring type of data.
The most notable work for popularizing research on stacked area charts is from 2008 by Byron and Wattenberg \cite{byronStackedGraphsGeometry2008}, since receiving more than 600 citations. Given the extensive literature on stacked area charts, we focus on studies specifically addressing wiggle minimization and similar aesthetic criteria in related visualizations.
Byron and Wattenberg \cite{byronStackedGraphsGeometry2008} only deal with data that represent movie revenues. Such data have very specific properties -- a movie starts airing, its revenues grow quickly, and then the revenues slowly decrease. Thus, they present a very specific ordering approach that orders the movies by their first screening. They also consider a similar type of visualization, called \emph{streamgraphs}. The only difference between streamgraphs and stacked area charts is that the lower border of a streamgraph does not have to be a horizontal line, introducing another degree of freedom. Still, wiggle minimization is defined equivalently for streamgraphs.  Byron and Wattenberg present a similar ordering procedure for streamgraphs.
\iflong Furthermore, they consider the task of computing the lower border of visualization given a fixed vertical order of time series. 
For that task, they show an optimal polynomial time algorithm that minimizes the squared weighted wiggle (each wiggle value is squared instead of taking its absolute value). \fi
Greffard and Kuntz~\cite{greffardVisualizingSetMultiple2015} present an approach for weighted squared wiggle minimization in streamgraphs that first defines pairwise dissimilarity values between time series and then applies traveling salesperson approaches. Di Bartolomeo and Hu~\cite{dibartolomeoThereMoreStreamgraphs2016} present a heuristic and local search algorithm for weighted wiggle minimization in streamgraphs, which also work for stacked area charts. 
\iflong In addition, they  show how to, given a fixed order of time series, compute the lower border of the streamgraph that minimizes weighted wiggle. \fi
Bu et al.~\cite{buSineStreamImprovingReadability2021} present an ordering algorithm for streamgraphs that is based on clustering. They also optimize for minimizing what they call \emph{sine illusion effect} -- time series borders mimicking a sine wave. He and Li~\cite{heOptimalLayoutStacked2022} present an ordering approach for stacked area charts based on a traveling salesperson formulation. Their approach aims to minimize the covariance between adjacent time series in the stacked area chart.
Lastly, Mathiesen and Schulz~\cite{strungemathiesenAestheticsOrderingStacked2021} present a heuristic algorithm that is able to minimize a weighted sum of quality criteria for stacked area charts, including wiggle. Their approach is an improvement over an algorithm by Di Bartolomeo and Hu~\cite{dibartolomeoThereMoreStreamgraphs2016}, and they demonstrate its effectiveness in an experimental evaluation.

The problem of ordering a set of numbers to minimize the accumulation of all absolute prefix sums is related to some classic problems from scheduling. Tsai~\cite{DBLP:journals/orl/Li-Hui92} studies the problem where instead of minimizing the accumulation, one wants to minimize the maximum absolute prefix sum. They claim strong \NP-hardness (a formal proof is absent). Kellerer et al.~\cite{DBLP:journals/ior/KellererKRW98} consider the same problem as Tsai with the additional constraint that each prefix sum must be positive, and show constant-factor approximation algorithms. Further, they explicitly pose the open problem of minimizing the sum of all prefix sums. In their  definition, though, the ordering has to satisfy that each prefix sum is positive, which is not the case for us.

\subparagraph*{Our contribution.}
We consider wiggle minimization and weighted wiggle minimization in stacked area charts from a computational complexity perspective. For this, we define the two computational problems \pwigglemin\ and \wpwigglemin\ in \cref{section:prelims}. The value $p$ in both problem definitions corresponds to the exponent of each wiggle value in the objective function, in turn capturing variants of wiggle minimization such as the weighted squared wiggle minimization from Byron and Wattenberg~\cite{byronStackedGraphsGeometry2008}. Further, we introduce and investigate a special case of instances leading to the problem \prefsumsumprob\ -- ordering a set of numbers to minimize the sum of absolute prefix sums.

In \cref{section:prefsumsum}, we show that \prefsumsumprob\ is strongly \NP-hard and that both wiggle problems with $p=1$ are strongly \NP-hard, even if the number of time points is constant. Next, we also consider special cases of \prefsumsumprob, where there is only one positive or one negative element in the input. \cref{section:pwigglemin} shows that both wiggle problems for arbitrary $p$ are strongly \NP-hard, and hard to approximate under specific complexity assumptions. Further, a lower bound on the approximation ratio of a known greedy heuristic used in \cite{dibartolomeoThereMoreStreamgraphs2016} and \cite{strungemathiesenAestheticsOrderingStacked2021} is shown.
Lastly, \cref{section:experimental} presents a mixed-integer linear program for \wonewigglemin\ and compares it with the state-of-the-art heuristic of Mathiesen and Schulz~\cite{strungemathiesenAestheticsOrderingStacked2021} in an experimental evaluation on real-world data.

\smallskip\noindent
Due to space constraints, statements marked with \appsymb\ are proved in the appendix.

\section{Preliminaries and Problem Definitions}\label{section:prelims}

\iflong\subparagraph*{Permutations.}\fi We define $[n]=\{1,\dots,n\}$. A permutation $\pi$ of a multiset $S=\{s_1,\dots,s_n\}$ is a bijection $\pi:[n]\mapsto S$. We sometimes treat $\pi$ as the list $(\pi(1),\pi(2),\dots,\pi(n))$. Further, we define $\text{pos}_\pi(s)=i$ if and only $\pi^{-1}(s)=i$.

\iflong\subparagraph*{Time series.}\fi An \emph{$\ell$-time series} $f$ is a function $f:[\ell]\to \mathbb{R}^+_0$. We refer to the elements of its image as \emph{data points}.
A set $F$ of $\ell$-time series, is \emph{balanced} if $\sum_{f\in F}f(i)=\sum_{f\in F}f(j)$ for all $i,j\in [\ell]$.

\iflong\subparagraph*{Strong NP-hardness.}\fi A computational problem is \emph{strongly \NP-hard} if it remains \NP-hard even if its numerical parameters are integers that are polynomial in the input size.

\paragraph*{Problem Definitions}
In most problems discussed in this paper, we are given a set $F=\{f_1,\dots,f_n\}$ of $\ell$-time series. Given a permutation $\pi$ of $F$, $i\in \{0\}\cup [n]$, and $j\in [\ell-1]$, we define the \emph{wiggle} value $W^\pi_{i,j}=\sum_{k=1}^i(\pi(k)(j+1)-\pi(k)(j))$. 
This results in the following two problem variants of \emph{wiggle minimization for stacked area charts}, the first unweighted, and the second weighted by the time series data points - as is more common in the stacked area charts literature \cite{dibartolomeoThereMoreStreamgraphs2016,strungemathiesenAestheticsOrderingStacked2021,greffardVisualizingSetMultiple2015} (see also \cref{fig:stackedareachart}b). For both problems, $p$ is a positive integer corresponding to the exponent of the wiggle values. The second problem is equivalent to the minimization of \emph{flatness} in~\cite{strungemathiesenAestheticsOrderingStacked2021}. 
\begin{problem}[\pwigglemin] 
    Given a set $F=\{f_1,\dots,f_n\}$ of $\ell$-time series, find a permutation $\pi$ of $F$ such that $\sum_{i=1}^{n}\sum_{j=1}^{\ell-1}|W^\pi_{i,j}|^p$ is minimized.
\end{problem}

\begin{problem}[\wpwigglemin]
    Given a set $F=\{f_1,\dots,f_n\}$ of $\ell$-time series, find a permutation $\pi$ of $F$ such that 
    \begin{equation*}
        \sum_{i=1}^{n}\sum_{j=1}^{\ell-1}\frac{\pi(i)(j)+\pi(i)(j+1)+\pi(i+1)(j)+\pi(i+1)(j+1))}{4}|W^\pi_{i,j}|^p
    \end{equation*} is minimized. Above, $\pi(n+1)$ is the time series containing only zeroes.
\end{problem}
If $\ell=2$, then \onewigglemin\ is equivalent (see \cref{lemma:prefsumsumtowigglemin}) to the following fundamentally interesting problem, which we also study in this paper.

\begin{problem}[\prefsumsumprob]
    Given a multiset $S=\{s_1,\dots,s_n\}$ of real numbers, find a permutation $\pi$ of $S$ such that $\sum_{i=1}^n \left|\sum_{j=1}^{i}\pi(j)\right|$ is minimized.
\end{problem}
Given a solution $\pi$ of \prefsumsumprob, we define for $i=0,1,\dots,n$, $P^\pi_i=\sum_{j=1}^i\pi(j)$.

Next, we want to show relationships between the previously defined problems.
\begin{restatable}[\appsymb]{lemma}{lemmaprefsumsumtowigglemin}\label{lemma:prefsumsumtowigglemin}
    Given an instance $S=\{s_1,\dots,s_n\}$ of \prefsumsumprob, there exists an instance $F=\{f_1,\dots,f_n\}$ of \onewigglemin\ on two time points such that $S$ has a solution with value $x$ if and only if $F$ has a solution with value $x$. If $\sum_{s\in S}s=0$, then $F$ is balanced. Further, data point values in $F$ are bounded by a polynomial of the values in $S$.
    
    Conversely, for an instance $F$ of \onewigglemin\ on two time points, there exists an instance $S$ of \prefsumsumprob\ such that $S$ has a solution with value $x$ if and only if $F$ has a solution with value $x$.
\end{restatable}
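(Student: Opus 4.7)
The plan is an explicit, value-preserving reduction in both directions via the identification of a two-point time series $(f_k(1),f_k(2))$ with its difference $f_k(2)-f_k(1)$. For the forward direction, given $S=\{s_1,\dots,s_n\}$, I choose $M=\max_k |s_k|$ and define $f_k(1)=M$ and $f_k(2)=M+s_k$ for each $k\in[n]$. The choice of $M$ guarantees $f_k(1),f_k(2)\geq 0$ (so $f_k:[2]\to\mathbb{R}^+_0$), and all data points are bounded by $2M$, hence polynomially bounded in the magnitudes of the input $S$.

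The core computation is the following. Any permutation $\pi$ of $F$ induces a permutation $\tau$ of $S$ via $\tau(k)=\pi(k)(2)-\pi(k)(1)$, and conversely, so this is a bijection between orderings of the two multisets. For each $i\in[n]$,
\begin{equation*}
W^{\pi}_{i,1} \;=\; \sum_{k=1}^{i}\bigl(\pi(k)(2)-\pi(k)(1)\bigr) \;=\; \sum_{k=1}^{i}\tau(k) \;=\; P^{\tau}_i.
\end{equation*}
Summing $|W^{\pi}_{i,1}|$ and $|P^{\tau}_i|$ over $i=1,\dots,n$ makes the objectives of \onewigglemin\ (with $\ell=2$) and \prefsumsumprob\ coincide term by term, which yields both implications of the claimed equivalence in a single stroke. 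The balance property is immediate from $\sum_k f_k(1)=nM$ and $\sum_k f_k(2)=nM+\sum_k s_k$: these agree precisely when $\sum_{s\in S}s=0$.

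For the reverse direction I run the same calculation without a shift: given an instance $F$ on two time points, I set $s_k=f_k(2)-f_k(1)\in\mathbb{R}$ and the identity $W^{\pi}_{i,1}=P^{\tau}_i$ again gives the value-preserving correspondence between permutations, so solution values are preserved in both directions.

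The lemma is essentially an unfolding of definitions, so there is no real obstacle. The only points requiring a little care are choosing the shift $M$ large enough to ensure nonnegativity of the constructed time series, verifying that the bound on the data-point values is polynomial in the input (immediate from $|f_k(j)|\leq 2M$), and noting that the described correspondence $\pi\leftrightarrow\tau$ is a bijection on orderings of the respective multisets so that minimization on one side corresponds to minimization on the other.
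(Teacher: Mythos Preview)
Your proof is correct and follows essentially the same approach as the paper: both rely on the identity $W^{\pi}_{i,1}=\sum_{k\le i}(\pi(k)(2)-\pi(k)(1))=P^{\tau}_i$ after arranging that $f_k(2)-f_k(1)=s_k$, and both take $s_k=f_k(2)-f_k(1)$ for the converse. The only cosmetic difference is the embedding into nonnegative time series---the paper sets one coordinate to $0$ and the other to $|s_i|$ (so $f_i(1)=0,\,f_i(2)=s_i$ if $s_i\ge 0$ and $f_i(1)=-s_i,\,f_i(2)=0$ otherwise), whereas you use a uniform shift by $M=\max_k|s_k|$; either choice yields the required nonnegativity, polynomial bound, and balance when $\sum_k s_k=0$.
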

\appendixproof{lemma:prefsumsumtowigglemin}{
\ifshort\lemmaprefsumsumtowigglemin*\fi
\begin{proof}
    Let $S=\{s_1,\dots,s_n\}$ be an instance of \prefsumsumprob. The instance $F=\{f_1,\dots,f_n\}$ of \onewigglemin\ is defined on two time points as follows. For $i\in [n]$, let
    \begin{itemize}
        \item $f_i(1)=0$ and $f_i(2)=s_i$ if $s_i\ge 0$, and
        \item $f_i(1)=-s_i$ and $f_i(2)=0$ otherwise.
    \end{itemize}
    The equivalence between the two instances is immediate since $f_i(2)-f_i(1)=s_i$.

    For the other direction, define $s_i=f_i(2)-f_i(1)$.
\end{proof}
}

\begin{restatable}[\appsymb]{lemma}{lemmawigglemintoweightedwigglemin}\label{lemma:wigglemintoweightedwigglemin}
    Let $p\ge 1$ be an integer.
    Given a balanced instance $F=\{f_1,\dots,f_n\}$ of \pwigglemin\ on $\ell$ time points, there is an instance $F'=\{f'_1,\dots,f'_n\}$ of \wpwigglemin\ on $4\ell+1$ time points and a constant $C$ such that $F$ has a solution with value $x$ if and only if $F'$ has a solution with value $Cx$. Further, the values of the data points in $F'$ are polynomially bounded w.r.t.\ the values of the data points in $F$.
\end{restatable}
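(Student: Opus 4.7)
The plan is to construct $F'$ so that every transition of $F'$ is of one of two harmless types, both of which give a contribution to the weighted $p$-wiggle objective that does not depend on the permutation in a problematic way:
\emph{flat} transitions $t \to t+1$ with $f'_i(t) = f'_i(t+1)$ for all $i$ (contributing $0$ to the objective regardless of the permutation), and \emph{symmetric} transitions with $f'_i(t) + f'_i(t+1) = 2V$ for all $i$, where $V$ is a fixed constant. Observe that at a symmetric transition the weight factor
$\bigl(\pi(i)(t)+\pi(i)(t+1)+\pi(i+1)(t)+\pi(i+1)(t+1)\bigr)/4$
collapses to exactly $V$ for every permutation $\pi$ and every index $i$, because the two pairs $(\pi(i)(t),\pi(i)(t+1))$ and $(\pi(i+1)(t),\pi(i+1)(t+1))$ each sum to $2V$. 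Hence every symmetric transition contributes $V \cdot |W^{\pi}_{i,t}|^p$ to the weighted $p$-wiggle.

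To encode the original wiggles, for each $j \in [\ell-1]$ reserve one symmetric transition in $F'$ at which $f'_i$ jumps from $V - \beta_i^{(j)}/2$ to $V + \beta_i^{(j)}/2$, where $\beta_i^{(j)} = f_i(j+1) - f_i(j)$. The wiggle value at this transition is exactly $\sum_{k=1}^{i} \beta^{(j)}_{\pi(k)} = W^{\pi}_{i,j}$, i.e.\ the corresponding unweighted wiggle of $F$. Because $F$ is balanced, $\sum_i \beta_i^{(j)} = 0$, which guarantees that the top-level wiggle vanishes and that the symmetric construction is internally consistent. To glue these symmetric transitions together we insert flat holds (repetitions of the current values), plus one auxiliary symmetric transition per block that reverses the block; this lets us always return to a canonical level before starting the next block. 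Counting time points, each original transition is expanded into a constant-sized gadget of four new time points (a down/up pair followed by a matching hold/return), yielding $4\ell + 1$ time points in total with one extra boundary point.

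Fix $V$ larger than the maximum absolute value appearing in any $\beta_i^{(j)}$ so that all constructed data points lie in $\mathbb{R}^+_0$; clearly $V$ is polynomial in the input values of $F$. Under this construction, the weighted $p$-wiggle of $F'$ splits as a sum over symmetric transitions only, each contributing $V \cdot |W^{\pi}_{i,j}|^p$ (original transitions) or the same amount coming from the reverse-and-return symmetric transitions. Collecting terms, the total weighted $p$-wiggle of $F'$ equals $C \cdot x$, where $x$ is the $p$-wiggle of $F$ under the matching permutation and $C$ is a constant multiple of $V$ determined by the number of symmetric transitions per block. Optimality is preserved in both directions because the permutation of $F'$ is literally the permutation of $F$ (the series $f'_i$ are identified with $f_i$).

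The main obstacle is the bookkeeping for the gluing step: the end of the $j$-th wiggle block lies at $V + \beta_i^{(j)}/2$, while the start of the $(j+1)$-st block must be $V - \beta_i^{(j+1)}/2$, and connecting these two levels without introducing a permutation-dependent weight or an uncontrolled wiggle is exactly what forces the $4\ell + 1$ time points and relies on the balanced condition (so that the required symmetric reverse-transitions are always well defined and non-negative). Verifying that every inserted transition is either flat or satisfies $f'_i(t) + f'_i(t+1) = 2V$ is the technical core of the proof.
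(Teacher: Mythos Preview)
Your symmetric-transition idea is attractive, but the gluing step does not work as stated, and this is a genuine gap rather than missing bookkeeping. Consider what levels a single series $f'_i$ can visit if \emph{every} transition is either flat ($f'_i(t)=f'_i(t+1)$) or symmetric ($f'_i(t)+f'_i(t+1)=2V$). A flat step keeps the level; a symmetric step sends level $a$ to $2V-a$. Hence, starting from any initial level $a_0$, the series $f'_i$ is forever confined to the two values $a_0$ and $2V-a_0$. But for the encoding transition of block $j$ you need $f'_i$ to sit at level $V-\beta_i^{(j)}/2$, which forces $|a_0-V|=|\beta_i^{(j)}|/2$ for \emph{every} $j$. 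This is impossible unless all the $|\beta_i^{(j)}|$ coincide across $j$, which of course they need not for a general instance. So no arrangement of flat holds and symmetric reversals can connect blocks with different increments, and your claimed ``return to a canonical level before starting the next block'' cannot happen.

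The paper's construction sidesteps this obstruction by \emph{not} making each transition symmetric. Instead, each block of four transitions follows the pattern $M\to M+\delta\to M\to M-\delta\to M$ with $\delta=\beta_i^{(j)}$, so every block begins and ends at the same canonical level $M$ and gluing is free. None of these four transitions is symmetric in your sense, so the weight on each one separately \emph{does} depend on the permutation; the point is that all four share the same absolute wiggle $|W^{\pi}_{i,j}|$, and when their four weight factors are summed the $\delta$-terms cancel, leaving a total weight of exactly $4M$ independent of the permutation. Your per-transition symmetry is replaced by a per-block cancellation, and it is precisely this relaxation that makes the gluing possible while still yielding $C=4M$.
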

\appendixproof{lemma:wigglemintoweightedwigglemin}{
    \ifshort\lemmawigglemintoweightedwigglemin*\fi
    \begin{proof}
    Let $F=\{f_1,\dots,f_n\}$ be a balanced instance of \pwigglemin\ on $\ell$ time points. We construct $F'=\{f'_1,\dots,f'_n\}$ as an instance of \wpwigglemin\ on $4\ell+1$ time points.
    Let $M=\max\{f_i(j)\mid i\in[n],j\in[\ell]\}$.
    We start by defining $f'_i(1)=M$ for all $i\in [n]$. Next, we define inductively for $j=1,\dots,\ell$:
    \begin{itemize}
        \item $f'_i(4j-2)=f'_{i}(4j-3)+(f_i(j+1)-f_i(j))$ for all $i\in [n]$,
        \item $f'_i(4j-1)=f'_{i}(4j-2)-(f_i(j+1)-f_i(j))$ for all $i\in [n]$,
        \item $f'_i(4j)=f'_{i}(4j-1)-(f_i(j+1)-f_i(j))$ for all $i\in [n]$, and
        \item $f'_i(4j+1)=f'_{i}(4j)+(f_i(j+1)-f_i(j))$ for all $i\in [n]$.
    \end{itemize}
    Notice that $f'_i(4j+1)=M$ for all $i\in [n]$ and that the average value of $f'_i(4j-2), f'_i(4j-1), f'_i(4j), f'_i(4j+1)$ is $M$.
    Now consider a permutation $\pi$ of $F$. Consider the permutation $\pi'$ of $F'$ defined such that $\text{pos}_{\pi}(f_i)=\text{pos}_{\pi'}(f'_i)$ for all $i\in [n]$.
    Notice that $|W^{\pi}_{i,j}|=|W^{\pi'}_{i,4j-3}|=|W^{\pi'}_{i,4j-2}|=|W^{\pi'}_{i,4j-1}|=|W^{\pi'}_{i,4j}|$. 
    Define the weighted wiggle $WW_{i,j}$ as
    \[WW_{i,j}=\frac{\pi'(i)(j)+\pi'(i)(j+1)+\pi'(i+1)(j)+\pi'(i+1)(j+1))}{4}|W^{\pi'}_{i,j}|^p.\]
    We have for $i\in [n]$ and $j\in [\ell-1]$ that
    \[WW_{i,4j-3}+WW_{i,4j-2}+WW_{i,4j-1}+WW_{i,4j}=4M|W^\pi_{i,j}|.\]
    It follows that $\pi$ is a solution of \pwigglemin\ with value $x$ if and only if $\pi'$ is a solution of \wpwigglemin\ with value $4Mx$. Note that $F$ has to be balanced for this, as otherwise the wiggle values $|W^{\pi'}_{n,j}|$ for $j\in [4\ell]$ would only be weighted by $M/2$ and not~$M$.
\end{proof}
}

A trivial algorithm for the mentioned problems would need to go over all permutations, and thus need $\mathcal{O}(n!|I|)$ time when $|I|$ is the instance size. A simple improvement is to use dynamic programming by computing the optimal solution for each subset of time series (or elements in $S$), leading to an algorithm needing time $\mathcal{O}(2^{n}|I|^c)$ for some constant $c$. In the following sections, we want to find out whether polynomial time exact, or approximation algorithms for the problems are likely.

\section{Complexity of \texorpdfstring{\prefsumsumprob\ and \probname{(Weighted)-}\onewigglemin}{}}\label{section:prefsumsum}
In this section, we consider the problem \prefsumsumprob. We will show several complexity results, some of which will imply results for \onewigglemin\ and \wonewigglemin. \iflong We start with a proof of strong \NP-hardness.\fi
\subsection{Strong NP-hardness}
Most of this section is devoted to showing strong \NP-hardness of \prefsumsumprob. The proof idea is to reduce instances of a known \NP-hard problem to instances $S$ of \prefsumsumprob, such that $S$ has a solution value which is equal to a lower bound if and only if the instance which we reduced from is a yes instance. This lower bound is stated next.
\begin{lemma}\label{lemma:lowerbound}
    A lower bound for the objective function of \prefsumsumprob\ is $\sum_{s\in S}|s|/2$.
\end{lemma}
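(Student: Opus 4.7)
The plan is to apply the triangle inequality to the telescoping identity that expresses each element of $S$ as a difference of consecutive prefix sums. Fix any permutation $\pi$ of $S$. Since $P^\pi_0 = 0$ and $\pi(i) = P^\pi_i - P^\pi_{i-1}$ for every $i \in [n]$, the triangle inequality immediately gives $|\pi(i)| = |P^\pi_i - P^\pi_{i-1}| \leq |P^\pi_i| + |P^\pi_{i-1}|$ for each $i$.

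Next, I would sum this inequality over $i = 1, \dots, n$. On the left-hand side we get $\sum_{i=1}^n |\pi(i)| = \sum_{s \in S}|s|$, since $\pi$ is a permutation of $S$. On the right-hand side each term $|P^\pi_i|$ with $1 \leq i \leq n-1$ appears exactly twice, whereas $|P^\pi_0| = 0$ contributes nothing and $|P^\pi_n|$ appears once. Therefore $\sum_{s\in S}|s| \leq 2\sum_{i=1}^{n-1}|P^\pi_i| + |P^\pi_n| \leq 2\sum_{i=1}^{n}|P^\pi_i|$, and dividing by $2$ gives the claimed lower bound on the objective value attained by $\pi$. Because the inequality holds for every permutation, it is also a lower bound for the optimum.

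I do not anticipate any real obstacle here. The argument is essentially a one-line application of the triangle inequality, and the only care needed is the bookkeeping to verify that $|P^\pi_0| = 0$ and that doubling the prefix-sum terms is enough to dominate $\sum_{s\in S}|s|$ even though $|P^\pi_n|$ is only covered once on the right-hand side.
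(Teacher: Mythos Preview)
Your proof is correct and essentially identical to the paper's own argument: both derive $|\pi(i)|\le |P^\pi_i|+|P^\pi_{i-1}|$ (which you justify explicitly via the triangle inequality applied to $P^\pi_i-P^\pi_{i-1}$) and then sum over $i$, noting that each $|P^\pi_i|$ is counted at most twice. The bookkeeping you describe matches the paper's exactly.
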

\begin{proof}
    Let $\pi$ be some solution of \prefsumsumprob.
    It is easy to see that for $i=1,\dots,n$,
    \begin{align*}
        |P^\pi_i|+|P^\pi_{i-1}|\ge |\pi(i)|.
    \end{align*}
    Thus,
    \begin{align*}
        2\sum_{i=0}^n |P^\pi_i|&\ge |P^\pi_0|+|P^\pi_n|+2\sum_{i=1}^{n-1} |P^\pi_i|\ge \sum_{i=1}^n |s_i|\qedhere
    \end{align*}
\end{proof} 
We call a solution $\pi$ achieving this bound \emph{bound-achieving}.
The next lemma states a sufficient condition that implies that the solution value exceeds the previously stated lower bound.
\begin{lemma}\label{lemma:abovelowerbound}
    Let $\pi$ be a solution of \prefsumsumprob\ and $\sum_{i=1}^ns_i=0$. 
    There exists $j\in [n]$ such that $|P^\pi_j|+|P^\pi_{j-1}|>|\pi(j)|$ if and only if $\sum_{i=1}^n|P^\pi_i|> \sum_{i=0}^n|s_i|/2$.
\end{lemma}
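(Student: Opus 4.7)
The plan is to revisit the inequality chain used in the proof of \cref{lemma:lowerbound} and observe that, under the extra hypothesis $\sum_{i=1}^n s_i = 0$, it becomes an equality chain whose slack is distributed term-by-term. The extra hypothesis is crucial because it forces $P^\pi_0 = 0$ and $P^\pi_n = 0$, so the ``boundary terms'' in the telescoping identity vanish and the remaining inequality is tight exactly when each local inequality $|P^\pi_i| + |P^\pi_{i-1}| \ge |\pi(i)|$ is tight.

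Concretely, I would first record the identity
\begin{equation*}
\sum_{i=1}^{n}\bigl(|P^\pi_i| + |P^\pi_{i-1}|\bigr) = |P^\pi_0| + |P^\pi_n| + 2\sum_{i=1}^{n-1}|P^\pi_i|,
\end{equation*}
and then, using $P^\pi_0 = P^\pi_n = 0$, conclude that this quantity equals $2\sum_{i=1}^n|P^\pi_i|$. Combined with the pointwise triangle-style bound $|P^\pi_i| + |P^\pi_{i-1}| \ge |P^\pi_i - P^\pi_{i-1}| = |\pi(i)|$ already used in \cref{lemma:lowerbound}, summing over $i$ yields
\begin{equation*}
2\sum_{i=1}^n|P^\pi_i| = \sum_{i=1}^n\bigl(|P^\pi_i| + |P^\pi_{i-1}|\bigr) \ge \sum_{i=1}^n |\pi(i)| = \sum_{i=1}^n|s_i|.
\end{equation*}

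From this display the equivalence is immediate: the sum on the right-hand side is strictly larger than $\sum_{i=1}^n|s_i|$ if and only if at least one of the nonnegative summands $\bigl(|P^\pi_i|+|P^\pi_{i-1}|\bigr)-|\pi(i)|$ is strictly positive, i.e.\ if and only if there exists $j\in[n]$ with $|P^\pi_j|+|P^\pi_{j-1}| > |\pi(j)|$. Dividing by $2$ gives the claimed characterization $\sum_{i=1}^n |P^\pi_i| > \sum_{i=1}^n |s_i|/2$. I do not expect any real obstacle here; the only point worth being careful about is making explicit that the hypothesis $\sum_{i=1}^n s_i = 0$ is exactly what allows the boundary terms $|P^\pi_0|$ and $|P^\pi_n|$ to drop out, so that the sum-level inequality and the term-wise inequalities carry the same slack.
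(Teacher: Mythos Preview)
Your proposal is correct and follows essentially the same approach as the paper: the paper's proof also uses the identity $2\sum_{i=0}^n|P^\pi_i|=2\sum_{i=1}^{n-1}|P^\pi_i|$ (coming from $P^\pi_0=P^\pi_n=0$) together with the termwise bounds $|P^\pi_i|+|P^\pi_{i-1}|\ge|\pi(i)|$ from \cref{lemma:lowerbound}. Your write-up is in fact more explicit than the paper's, which simply declares the backward direction ``clear'' and displays one line for the forward direction; your observation that the slack in the sum-level inequality equals the total slack in the termwise inequalities makes both directions transparent at once.
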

\begin{proof} The backwards direction is clear. For the forward direction, we have
    \begin{equation*}
        2\sum_{i=0}^n |P^\pi_i|=2\sum_{i=1}^{n-1}|P^\pi_i|>\sum_{i=1}^n |s_i|. \qedhere
    \end{equation*}
\end{proof}
Additionally, in our reduction, we will consider instances $S$ of \prefsumsumprob\ that have specific properties. These properties are described below.
\begin{enumerate}
\renewcommand{\labelenumi}{\textbf{\theenumi}}
        \renewcommand{\theenumi}{\mylipgray{(IP\arabic{enumi})}}
    \item $\sum_{i=1}^ns_i=0$.\label{enum:prop1}
    \item $n\equiv 0 \pmod{3}$, i.e., $n$ is a multiple of $3$.\label{enum:prop2}
    \item $S$ has exactly $\frac{2n}{3}$ positive elements and $\frac{n}{3}$ negative elements.\label{enum:prop3}
\end{enumerate}
Hence, $S$ also has no elements equal to zero.
We call an instance that satisfies \ref{enum:prop1}-\ref{enum:prop3} \emph{nice}, and observe the following properties that will be useful for the reduction.
\begin{lemma}\label{lemma:nobothpositive}
    Let $\pi$ be a bound-achieving solution of a nice instance $S$.
    At least one of $\pi(1), \pi(2)$ is negative and at least one of $\pi(n-1),\pi(n)$ is negative.
\end{lemma}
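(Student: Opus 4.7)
The plan is to promote \cref{lemma:abovelowerbound} into a clean local condition on consecutive prefix sums and then contradict it in two symmetric ways. Since a nice instance satisfies \ref{enum:prop1}, the contrapositive of \cref{lemma:abovelowerbound} together with \cref{lemma:lowerbound} forces $|P^\pi_j|+|P^\pi_{j-1}|=|\pi(j)|$ for every $j\in[n]$. Writing $\pi(j)=P^\pi_j-P^\pi_{j-1}$, this is exactly the equality case of the triangle inequality $|a-b|\le|a|+|b|$, so for each $j$ the two consecutive prefix sums $P^\pi_{j-1}$ and $P^\pi_j$ either have opposite signs or at least one of them vanishes.

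With this characterization in hand, I would handle both halves of the lemma by contradiction. Suppose first that $\pi(1)$ and $\pi(2)$ are both positive; nice instances contain no zero elements by \ref{enum:prop3}, so ``positive'' means strictly positive. Then $P^\pi_1=\pi(1)>0$ and $P^\pi_2=\pi(1)+\pi(2)>0$, so at $j=2$ both consecutive prefix sums are strictly positive, contradicting the sign-alternation condition. A symmetric argument handles $\pi(n-1),\pi(n)$: using $P^\pi_n=0$ (which follows from \ref{enum:prop1}), we get $P^\pi_{n-1}=-\pi(n)$ and $P^\pi_{n-2}=-\pi(n)-\pi(n-1)$. If both $\pi(n-1)$ and $\pi(n)$ were positive, then both of these prefix sums would be strictly negative, violating the sign-alternation condition at $j=n-1$.

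There is no real technical obstacle here once the equality-of-triangle-inequality characterization is extracted from \cref{lemma:abovelowerbound}; the remaining work is only two short sign-chasing arguments. The one subtlety worth flagging is that both properties \ref{enum:prop1} and \ref{enum:prop3} are used in an essential way: \ref{enum:prop1} is needed both for the applicability of \cref{lemma:abovelowerbound} and to guarantee $P^\pi_n=0$ for the right-end argument, while \ref{enum:prop3} (via the absence of zero elements) is what upgrades $P^\pi_1\ge 0$ and $P^\pi_{n-1}\le 0$ to strict inequalities, which is what actually produces the contradiction.
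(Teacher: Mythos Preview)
Your proof is correct and follows essentially the same route as the paper: both arguments derive a contradiction to \cref{lemma:abovelowerbound} by exhibiting consecutive prefix sums of the same strict sign (at $j=2$ for the left end and at $j=n-1$ for the right end). Your explicit framing of the bound-achieving condition as the equality case of $|a-b|\le|a|+|b|$ is a clean way to package the same computation the paper carries out directly.
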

\begin{proof}
    If both $\pi(1), \pi(2)$ are positive, then $\pi(2)<|P^\pi_1|+|P^\pi_2|$. \cref{lemma:abovelowerbound} gives us a contradiction, as we assumed that $\pi$ is bound-achieving.

    If both $\pi(n-1),\pi(n)$ are positive, then we have, as $P^\pi_n=0$, that $P^\pi_{n-1}=-\pi(n)$ and $P^\pi_{n-2}=-(\pi(n)+\pi(n-1))$. It follows that $|P^\pi_{n-2}|+|P^\pi_{n-1}|>|\pi(n-1)|$, a contradiction.
\end{proof}

\begin{lemma}\label{lemma:2conspi0}
    Let $\pi$ be a bound-achieving solution of a nice instance $S$. If for $i\in [n-1]$, $\pi(i)$ and $\pi(i+1)$ are both positive or both negative, then $P^\pi_i=0$.
\end{lemma}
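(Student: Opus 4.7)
The plan is to combine \cref{lemma:abovelowerbound} with a simple sign analysis of consecutive prefix sums. Since $\pi$ is bound-achieving, \cref{lemma:abovelowerbound} (contrapositive) tells us that $|P^\pi_j|+|P^\pi_{j-1}|\le|\pi(j)|$ for every $j\in[n]$. Combined with the triangle-type inequality $|P^\pi_j|+|P^\pi_{j-1}|\ge|P^\pi_j-P^\pi_{j-1}|=|\pi(j)|$ (which is exactly the inequality used in the proof of \cref{lemma:lowerbound}), this forces equality
\[
|P^\pi_j|+|P^\pi_{j-1}|=|\pi(j)|\qquad\text{for every }j\in[n].
\]

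The key observation is that this equality case in the triangle inequality forces $P^\pi_j$ and $P^\pi_{j-1}$ to have opposite signs in the weak sense (at least one may be zero, and the one matching the sign of $\pi(j)$ is the later or earlier one according to whether $\pi(j)$ is positive or negative). More precisely, if $\pi(j)>0$, then $P^\pi_j-P^\pi_{j-1}>0$, and the equality $|P^\pi_j|+|P^\pi_{j-1}|=P^\pi_j-P^\pi_{j-1}$ forces $P^\pi_j\ge 0$ and $P^\pi_{j-1}\le 0$. Symmetrically, if $\pi(j)<0$, it forces $P^\pi_j\le 0$ and $P^\pi_{j-1}\ge 0$.

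Now I would apply this to two consecutive indices. Suppose $\pi(i)>0$ and $\pi(i+1)>0$. Applying the observation at $j=i$ gives $P^\pi_i\ge 0$, and applying it at $j=i+1$ gives $P^\pi_i\le 0$. Hence $P^\pi_i=0$. The case where both $\pi(i)$ and $\pi(i+1)$ are negative is analogous: $j=i$ gives $P^\pi_i\le 0$ and $j=i+1$ gives $P^\pi_i\ge 0$, again yielding $P^\pi_i=0$.

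There is essentially no obstacle here; the only subtlety is in invoking \cref{lemma:abovelowerbound} correctly (using its contrapositive together with the always-true reverse inequality $|P^\pi_j|+|P^\pi_{j-1}|\ge|\pi(j)|$) to upgrade the bound-achieving hypothesis to per-index equality. Note that property~\ref{enum:prop1} ($\sum s_i=0$) is needed precisely so that \cref{lemma:abovelowerbound} applies; properties \ref{enum:prop2} and \ref{enum:prop3} are not used in this particular lemma.
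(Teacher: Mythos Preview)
Your proof is correct and follows essentially the same approach as the paper: both arguments use \cref{lemma:abovelowerbound} to conclude that a bound-achieving solution satisfies $|P^\pi_j|+|P^\pi_{j-1}|=|\pi(j)|$ for all $j$, and then exploit the resulting sign constraints on consecutive prefix sums. The paper argues by direct contradiction on the sign of $P^\pi_i$, while you phrase it as the equality case of the triangle inequality forcing $P^\pi_{j-1}\le 0\le P^\pi_j$ (or the reverse) --- a slightly cleaner packaging of the same idea.
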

\begin{proof}
    If $\pi(i)$ and $\pi(i+1)$ are both positive, and $P^\pi_i\ne 0$ then we have two cases.
    \begin{itemize}
        \item If $P^\pi_i>0$ then $|P^\pi_i|+|P^\pi_{i+1}|>\pi(i+1)$, a contradiction.
        \item If $P^\pi_i<0$ then $|P^\pi_{i-1}|+|P^\pi_i|>\pi(i)$, a contraction.
    \end{itemize}
    When both $\pi(i)$ and $\pi(i+1)$, are negative, the proof works the same.
\end{proof}

The following is a direct consequence of the last lemma.
\begin{corollary}\label{lemma:threeconsecutive}
    Let $\pi$ be a bound-achieving solution of a nice instance $S$. There exists no $i\in [n-2]$ such that $\pi(i),\pi(i+1),\pi(i+2)$ are all negative or all positive. 
\end{corollary}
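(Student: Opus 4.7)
My plan is to derive the corollary directly from \cref{lemma:2conspi0} by applying it to two overlapping consecutive pairs. Suppose, for contradiction, that there is an index $i\in [n-2]$ such that $\pi(i),\pi(i+1),\pi(i+2)$ all share the same sign. Then in particular the pair $\pi(i),\pi(i+1)$ consists of two entries of equal sign, and so does the pair $\pi(i+1),\pi(i+2)$. Invoking \cref{lemma:2conspi0} on each of these two pairs yields $P^\pi_i = 0$ and $P^\pi_{i+1}=0$ simultaneously.

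Subtracting these two equalities gives $\pi(i+1)=P^\pi_{i+1}-P^\pi_i=0$. However, a nice instance $S$ contains no zero elements, as pointed out immediately after the definition of properties \ref{enum:prop1}--\ref{enum:prop3} (since \ref{enum:prop3} accounts for all $n$ elements as strictly positive or strictly negative). This is the desired contradiction, and the only step requiring care is recognizing that the nice-instance assumption rules out zero entries so that $P^\pi_{i+1}-P^\pi_i=0$ is genuinely impossible. No other obstacles are expected, since \cref{lemma:2conspi0} already does all the heavy lifting.
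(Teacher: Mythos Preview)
Your argument is correct and is precisely the intended derivation: the paper states the corollary as a direct consequence of \cref{lemma:2conspi0}, and your application of that lemma to the overlapping pairs $(\pi(i),\pi(i+1))$ and $(\pi(i+1),\pi(i+2))$, forcing $\pi(i+1)=0$ in contradiction with niceness, is exactly how that consequence is obtained.
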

With these lemmas out of the way, we can show that $\pi$ has a unique structure with regard to the signs of its elements.
\begin{lemma}\label{lemma:signproperties}
    Let $\pi$ be a bound-achieving solution of a nice instance $S$. Then, for $i\in [n]$,
    \begin{itemize}
        \item $\pi(i)$ is positive if and only if $i\equiv 0\pmod{3}$ or $i\equiv 1\pmod{3}$,
        \item $\pi(i)$ is negative if and only if $i\equiv 2\pmod{3}$, and
        \item if $i\equiv 0\pmod{3}$, then $P^\pi_i=0$.
    \end{itemize}
\end{lemma}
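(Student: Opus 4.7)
The plan is to partition the index set $[n]$ into the $n/3$ consecutive blocks $B_k := \{3k-2, 3k-1, 3k\}$ for $k \in [n/3]$ and to show that the unique negative entry of each block occupies the middle slot $3k-1$. Since a nice instance has exactly $n/3$ negative and $2n/3$ positive elements (and no zeros), Corollary~\ref{lemma:threeconsecutive} immediately forces every $B_k$ to contain at least one negative position; counting then gives exactly one per block. Denote that position by $p_k \in B_k$. Once I prove $p_k = 3k-1$ for all $k$, the first two bullets of the lemma are immediate.

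The first step is to pin down $p_1 = 2$. By Lemma~\ref{lemma:nobothpositive} applied at the left end, $p_1 \in \{1,2\}$. Suppose for contradiction $p_1 = 1$. Then $\pi(2), \pi(3) > 0$, so to avoid three consecutive positives at positions $2,3,4$, Corollary~\ref{lemma:threeconsecutive} forces $\pi(4) < 0$, i.e., $p_2 = 4$. Iterating this "shift-left" propagation across blocks yields $p_k = 3k-2$ for every $k$. In particular $p_{n/3} = n-2$, making $\pi(n-1), \pi(n)$ both positive, which contradicts the second half of Lemma~\ref{lemma:nobothpositive}. Hence $p_1 = 2$.

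The second step is an induction on $k$: assuming $p_k = 3k-1$, I rule out the two alternatives for $p_{k+1}$. If $p_{k+1} = 3k+3$, then $\pi(3k+1), \pi(3k+2) > 0$, and combined with $\pi(3k) > 0$ (which follows from $p_k = 3k-1$), we obtain three consecutive positives, contradicting Corollary~\ref{lemma:threeconsecutive}. If $p_{k+1} = 3k+1$, then $\pi(3k+2), \pi(3k+3) > 0$, and the same shift-left propagation as in the $p_1 = 1$ case forces $p_j = 3j-2$ for all $j \geq k+1$; in particular $p_{n/3} = n-2$, yielding the same contradiction with Lemma~\ref{lemma:nobothpositive} at the right end. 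Therefore $p_{k+1} = 3k+2 = 3(k+1)-1$, completing the induction.

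Finally, for the prefix-sum bullet, fix $i = 3k$. If $k < n/3$, then $p_k = 3k-1$ and $p_{k+1} = 3k+2$ tell us that $\pi(3k)$ and $\pi(3k+1)$ are both positive, so Lemma~\ref{lemma:2conspi0} yields $P^\pi_{3k} = 0$. The remaining case $k = n/3$ is simply $P^\pi_n = \sum_{s \in S} s = 0$ by property~\ref{enum:prop1}. The main obstacle is the apparent combinatorial freedom in choosing which of the three slots in each block hosts the negative; the key observation is that any "early-negative" choice within some block propagates rigidly to the right end, where it is shut down by the symmetric right-endpoint half of Lemma~\ref{lemma:nobothpositive}.
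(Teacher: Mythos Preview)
Your proof is correct and takes a genuinely different route from the paper's. The paper argues by induction on the single position $i$, interleaving all three bullets: for each residue class it rules out the wrong sign by a pigeonhole count on the suffix (if the sign at $i$ were wrong, the remaining slots would hold too many positives relative to negatives, forcing three consecutive positives and ultimately $\pi(n-1),\pi(n)>0$), and in the case $i\equiv 2\pmod 3$ it explicitly uses the already-established $P^\pi_{i-2}=0$ to conclude $P^\pi_{i-1}>0$. You instead decouple the two claims. First you settle the sign pattern purely combinatorially: Corollary~\ref{lemma:threeconsecutive} plus counting gives exactly one negative per triple $\{3k-2,3k-1,3k\}$; a negative in slot $3k$ is killed locally by three consecutive positives at $3k,3k+1,3k+2$, while a negative in slot $3k-2$ propagates rigidly (again via Corollary~\ref{lemma:threeconsecutive}) to force $\pi(n-1),\pi(n)>0$, contradicting Lemma~\ref{lemma:nobothpositive}. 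Only after the full sign pattern is in hand do you read off $P^\pi_{3k}=0$ from Lemma~\ref{lemma:2conspi0} applied to the consecutive positives $\pi(3k),\pi(3k+1)$. Your separation is cleaner and avoids the suffix counting; the paper's interleaved induction is more compact but relies on the prefix-sum information mid-argument. Both proofs use exactly the same three auxiliary lemmas.
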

\begin{proof}
    We show the claim by induction on $i$. For the base case, $i=1$, assume for a contradiction that $\pi(1)$ is negative. Together with  \cref{lemma:nobothpositive} and the pigeonhole principle, it follows that there must be three consecutive positive elements, a contradiction to \cref{lemma:threeconsecutive}.

    For the induction step $i>1$, we have the following cases.
    \begin{itemize}
        \item If $i\equiv 0\pmod{3}$ then $\pi(i)$ must be positive; otherwise, for the remaining $n-i$ following elements, we have $(n-i)/3-1$ negative elements and $2(n-i)/3+1$ positive elements. By the pigeonhole principle, there are three consecutive positive elements, a contradiction. Further $P^\pi_i=0$ must hold: For $i=n$ it holds trivially. Otherwise, if $P^\pi_i$ was not zero, then $\pi(i+1)$ must be negative by \cref{lemma:2conspi0}. For there to be no three consecutive positive elements in the remaining array, $\pi(n-1)$ and $\pi(n)$ must be positive, a contradiction.
        \item If $i\equiv 1\pmod{3}$ then $\pi(i)$ is positive by the same argumentation as in the base case.
        \item If $i\equiv 2\pmod{3}$ then $\pi(i)$ is clearly negative; otherwise, as $P^\pi_{i-1}$ is positive, $|P^\pi_i|+|P^\pi_{i+1}|>|\pi(i)|$ would hold; a contradiction (see \cref{lemma:abovelowerbound}). \qedhere
    \end{itemize}
\end{proof}

With this groundwork, we are able to show the following result. 
\begin{theorem}
    The decision variant of \prefsumsumprob\ is strongly \NP-complete.
\end{theorem}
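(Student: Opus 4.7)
The plan is to reduce from the strongly \NP-complete problem \textsc{Numerical 3-Dimensional Matching} (N3DM): given multisets $X, Y, Z$ of $m$ positive integers each with $\sum X + \sum Y + \sum Z = mT$, decide whether $X \cup Y \cup Z$ can be partitioned into $m$ triples, one from each set, each summing to $T$. Membership of the decision variant of \prefsumsumprob\ in \NP\ is immediate, since a permutation of $S$ is a polynomial-size certificate whose objective value is directly computable.

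Given an N3DM instance, I would construct a \prefsumsumprob\ instance $S$ of size $n = 3m$ using an additive ``tag'' trick. Fix an integer $M \geq 2$ and define the positive elements of $S$ as $\{Mx + 1 : x \in X\} \cup \{My + 2 : y \in Y\}$ and the negative elements as $\{-(M(T - z) + 3) : z \in Z\}$. The constants $1, 2, 3$ are markers that survive modulo $M$. A short calculation confirms that $S$ is nice, i.e., satisfies \ref{enum:prop1}--\ref{enum:prop3}: the total sum vanishes precisely because $\sum X + \sum Y + \sum Z = mT$, there are $2m$ positives and $m$ negatives by construction, and $n$ is divisible by~$3$. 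All numerical values remain polynomially bounded in the input of N3DM, so this is a strong reduction.

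The core equivalence to establish is: N3DM is a yes-instance if and only if $S$ admits a bound-achieving solution in the sense of \cref{lemma:lowerbound}. For the forward direction, given N3DM triples $(x_i, y_i, z_i)$ with $x_i + y_i + z_i = T$, the permutation $Mx_1 + 1, -(M(T - z_1) + 3), My_1 + 2, Mx_2 + 1, -(M(T - z_2) + 3), My_2 + 2, \ldots$ has every third prefix sum equal to zero, and the remaining absolute prefix sums telescope to exactly $\sum_{s \in S}|s|/2$. For the reverse direction, \cref{lemma:signproperties} forces any bound-achieving permutation $\pi$ to split into $m$ consecutive triples $(\pi(3i-2), \pi(3i-1), \pi(3i))$ with sign pattern $(+,-,+)$ and triple sum zero, so $\pi(3i-2) + \pi(3i) = |\pi(3i-1)|$ for every $i \in [m]$.

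The main obstacle, and the entire reason for introducing the additive tags, is extracting a valid N3DM triple from each such sum-zero triple in $S$. Concretely, one must rule out pairing two positives of the same ``origin''. Reading the equation $\pi(3i-2) + \pi(3i) = |\pi(3i-1)|$ modulo $M$: if both positives carried tag $1$, the left-hand side would be $\equiv 2 \pmod{M}$, while the right-hand side is $\equiv 3 \pmod{M}$, impossible for $M \geq 2$; the tag-$2$ case is symmetric. Hence each triple must consist of one $X$-tagged and one $Y$-tagged positive together with one $Z$-tagged negative, and the equation $Mx + My + 3 = M(T - z) + 3$ collapses to $x + y + z = T$, recovering an N3DM triple. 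This yields the required equivalence and therefore strong \NP-completeness.
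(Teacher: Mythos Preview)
Your proof is correct and follows essentially the same route as the paper: reduce from \textsc{Numerical 3-Dimensional Matching} to a nice instance and invoke \cref{lemma:signproperties} to split any bound-achieving permutation into zero-sum $(+,-,+)$ triples; the only difference is that you separate $X$-, $Y$-, and $Z$-origin elements via residues modulo~$M$, whereas the paper shifts by multiples of the target $b$ and uses a size argument (two $X$-elements are too small, two $Y$-elements too large) to the same end. One minor omission: you should state that w.l.o.g.\ every $z<T$ (otherwise the N3DM instance is trivially a no-instance), so that each $-(M(T-z)+3)$ is genuinely negative and the constructed instance is nice.
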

\begin{proof}
    \NP-membership is obvious, thus we continue with \NP-hardness.
    We reduce from the strongly \NP-hard problem \textsc{Numerical 3-Dimensional Matching} \cite{DBLP:journals/scheduling/YuHL04}. The problem input consists of three multisets $X,Y$, and $Z$, each containing $k$ integers.
    Let $b=(\sum_{e\in X\cup Y\cup Z}e)/k$. The problem is to decide whether there exists a subset $M$ of $X\times Y\times Z$ such that
    \begin{enumerate}
        \item each element from $X$, $Y$, and $Z$ appears in exactly one triple, and
        \item $x+y+z=b$ holds for each triple $(x,y,z)\in M$.
    \end{enumerate}
    We can assume that each element from $X\cup Y\cup Z$ is positive (otherwise add to all elements $-\alpha+1$, where $\alpha=\min X\cup Y\cup Z$ and obtain an equivalent instance with regard to the solution). We also assume that each element from $X,Y$, and $Z$ is smaller than $b$.
    Now define the multisets $X'=\{x\mid x\in X\}$, $Y'=\{y+2b\mid y\in Y\}$, and $Z'=\{z-3b\mid z\in Z\}$. Let $n=|X'\cup Y'\cup Z'|$. Notice that $\sum_{e\in X'\cup Y'\cup Z'}e=0$, and $X',Y'$ and $Z'$ are pairwise disjoint.
    We now define the instance $S$ of \prefsumsumprob\ as $S=X'\cup Y'\cup Z'$. Notice that $S$ is a nice instance satisfying \ref{enum:prop1}-\ref{enum:prop3}.

    We claim that $(X,Y,Z)$ is a yes-instance of \textsc{Numerical 3-Dimensional Matching} if and only if there exists a solution $\pi$ of $S$ that is bound-achieving. We prove both directions.

    ``$\Rightarrow$'': Let $M$ be a subset of $X\times Y\times Z$ constituting a solution to the instance of \textsc{Numerical 3-Dimensional Matching}. We constructively define $\pi$ by going through the triples $(x,y,z)$ in $M$ in any order.
    Let $(x,y,z)$ be the $i$th triple.
    We define $\pi(3i-2)=x, \pi(3i-1)=z-3b$, and $\pi(3i)=y+2b$.
    By the definition of $X',Y'$, and $Z'$ such a permutation always exists.
    Now notice for $j=0,1,\dots, n$ that $P^\pi_j$ is
    \begin{itemize}
        \item $0$, if and only if $j\equiv 0\pmod{3}$,
        \item positive if and only if $j\equiv 1\pmod{3}$, and
        \item negative if and only if $j\equiv 2\pmod{3}$.
    \end{itemize}
    Thus, we have $|P^\pi_j|+|P^\pi_{j-1}|=|\pi(j)|$ for all $j\in [n]$. It follows that $\sum_{j=1}^n |P^\pi_j|=\sum_{s\in S} |s|/2$.

    ``$\Leftarrow$'': Assume that $\pi$ is bound-achieving. We construct $M$, starting with $M=\emptyset$.
    Because $S$ is nice, we have that \cref{lemma:signproperties} holds for $\pi$. For each $i\in [k]$ we have the following. Let $u:=\pi(i-2),v:=\pi(i-1)$, and $w:=\pi(i)$. Because of \cref{lemma:signproperties} we have that $u+v+w=0$, $v$ is negative, and $u$ and $w$ are positive. Hence, $v\in Z'$ holds. Now we claim that exactly one of $u$ and $w$ is from $X'$ and exactly one is from $Y'$. Indeed, if both $u$ and $w$ were from $X'$, then $u+w<2b<|v|$ and $u+v+w=0$ is impossible. Similarly, if both $u$ and $w$ were from $Y'$, then $u+w>4b>|v|$ and $u+v+w=0$ is impossible.
    Thus, w.l.o.g.\ assume that $u\in X'$ and $w\in Y'$. Add to $M$ the triple $(u,w-2b,v+3b)$.
    It is easy to verify that after adding this triple for each $i\in [k]$, $M$ verifies that $(X,Y,Z)$ is a yes-instance of \textsc{Numerical 3-Dimensional Matching}.
\end{proof}

\cref{lemma:prefsumsumtowigglemin} and \cref{lemma:wigglemintoweightedwigglemin} imply the following corollary. 
\begin{corollary}
    The decision variants of \onewigglemin\ and \wonewigglemin\ are strongly \NP-complete, even for a constant number of time points.
\end{corollary}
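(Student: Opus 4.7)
The plan is simply to chain the preceding strong \NP-hardness result for \prefsumsumprob\ with the two value-preserving reductions provided by \cref{lemma:prefsumsumtowigglemin} and \cref{lemma:wigglemintoweightedwigglemin}. Membership in \NP\ is immediate: a permutation is a polynomial-size certificate whose objective can be evaluated in polynomial time, so the entire argument concerns hardness.

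For \onewigglemin, I would directly invoke \cref{lemma:prefsumsumtowigglemin}. Starting from an arbitrary instance $S$ of \prefsumsumprob, the lemma constructs an equivalent \onewigglemin\ instance $F$ on exactly two time points (a constant) and guarantees that the data point values of $F$ are polynomially bounded in the values of $S$. Composed with the preceding theorem, this yields strong \NP-completeness of \onewigglemin\ with two time points.

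For \wonewigglemin, a mild subtlety is that \cref{lemma:wigglemintoweightedwigglemin} requires the intermediate \onewigglemin\ instance to be balanced. The observation that makes this work is that the hardness proof of \prefsumsumprob\ reduces from \textsc{Numerical 3-Dimensional Matching} to \emph{nice} instances, which by \ref{enum:prop1} satisfy $\sum_{s\in S} s = 0$. By the additional clause of \cref{lemma:prefsumsumtowigglemin}, the resulting \onewigglemin\ instance on two time points is then balanced. Applying \cref{lemma:wigglemintoweightedwigglemin} with $p=1$ and $\ell=2$ produces an equivalent (up to the constant factor $C$) \wonewigglemin\ instance on $4\cdot 2 + 1 = 9$ time points, again with polynomially bounded data point values. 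This proves strong \NP-completeness of \wonewigglemin\ with a constant number of time points.

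There is no real obstacle here; the only thing to be careful about is to explicitly note that the \prefsumsumprob\ instances produced by the reduction in the previous theorem are nice (in particular satisfy \ref{enum:prop1}), which is exactly the hypothesis needed to feed the intermediate instance into \cref{lemma:wigglemintoweightedwigglemin}.
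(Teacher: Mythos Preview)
Your proposal is correct and matches the paper's approach exactly: the paper simply states that \cref{lemma:prefsumsumtowigglemin} and \cref{lemma:wigglemintoweightedwigglemin} imply the corollary, without spelling out details. Your write-up is in fact more careful than the paper's, since you explicitly track the balancedness hypothesis needed for \cref{lemma:wigglemintoweightedwigglemin} via \ref{enum:prop1} and correctly compute the resulting constant number of time points ($2$ and $9$, respectively).
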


\subsection{Restricted instances of \texorpdfstring{\prefsumsumprob}{}}
We further ask for which types of instances \prefsumsumprob\ becomes tractable. In particular, we want to know whether restricting the number of negative (or positive) elements of $S$ has an effect on the computational complexity of \prefsumsumprob. Indeed, if $S$ only contains positive or only negative elements, then an optimal ordering $\pi$ will simply order the elements of $S$ by their absolute value.  But what if $S$ contains exactly one positive or exactly one negative element? We obtain the following peculiar results, which state that the sum of $S$ has an effect on the complexity of \prefsumsumprob\ in this case.
\begin{restatable}[\appsymb]{theorem}{theoremprefsumsumpoly}\label{theorem:prefsumsumpoly}
    \prefsumsumprob\ can be solved in time $\mathcal{O}(n\log n)$ if $S$ contains exactly one negative (positive) element, and $\sum_{i=1}^ns_i=0$.
\end{restatable}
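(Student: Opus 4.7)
The plan is to exploit the very restricted structure forced by having only one negative element. Assume without loss of generality that $S$ contains a unique negative element $-N$ (with $N>0$) and positives $p_1, \ldots, p_{n-1}$ summing to $N$; the dual case with one positive reduces to this by negating every element, an operation that preserves $\sum_i |P^\pi_i|$.

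I would first fix the position $k := \text{pos}_\pi(-N)$, together with the induced split of the positives into the ``before'' set $A$ at positions $1, \ldots, k-1$ and the ``after'' set $B$ at positions $k+1, \ldots, n$. Because every other element is positive, the prefix sums are strictly positive for $i < k$ and lie in $[-S_B, 0]$ for $i \geq k$, where $S_B = N - S_A$. Hence the objective unwraps without case analysis into $\sum_{i=1}^{k-1} P_i + S_B + \sum_{i=k+1}^{n-1}(S_B - B_{i-k})$, with $B_\ell$ the partial sum of the first $\ell$ elements of $B$ in the chosen order. Two direct appeals to the rearrangement inequality then force $A$ to be arranged in increasing order and $B$ in decreasing order, and a short algebraic simplification---whose crux is the identity $N(n-k) - (n-k)\,S_A = (n-k)\,S_B$, which removes all trace of the negative element---collapses the expression to the clean bilinear form
\[
\Phi(k, A, B) \;=\; \sum_{p \in A} (k - j_p)\, p \;+\; \sum_{p \in B} l_p \cdot p,
\]
where $j_p$ and $l_p$ are the positions of $p$ inside $A$ and $B$ respectively.

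The multiset of coefficients appearing in $\Phi$ is $\mathcal{C}_k := \{1, \ldots, k-1\} \cup \{1, \ldots, n-k\}$, which depends only on $k$, so optimising over the partition $(A,B)$ is yet another rearrangement problem, solved by pairing the $i$-th largest positive with the $i$-th smallest coefficient. It remains to pick $k$: a direct computation shows that $\mathcal{C}_{k+1}$ is obtained from $\mathcal{C}_k$ by replacing the single element $n-k$ with the element $k$, so for $k < n/2$ the sorted coefficient sequence weakly decreases component-wise, and together with the reversal symmetry $\mathcal{C}_k = \mathcal{C}_{n-k+1}$ (since $\sum_i |P^\pi_i|$ is invariant under reversing $\pi$) this pins the optimum down at $k = \lceil (n+1)/2 \rceil$ (with a mirror image when $n$ is even). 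The resulting algorithm sorts the positives, determines this $k$, and reads off the prescribed permutation in $O(n \log n)$ time.

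The main technical step is the algebraic collapse to $\Phi(k, A, B)$: a naïve per-element accounting produces a large positive contribution $+N(n-k)$ from $-N$ alongside negative ``refund'' contributions $-p(n-k)$ from every positive in $A$, and only after using $S_A + S_B = N$ do these terms exactly cancel to leave the clean bilinear form. Once that identity is secured, the three applications of the rearrangement inequality and the monotonicity check in $k$ are routine.
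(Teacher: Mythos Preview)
Your proposal is correct and follows essentially the same route as the paper: fix the position $k$ of the lone negative element, rewrite the objective as a weighted sum $\sum_p c_p\, p$ of the positive elements with coefficient multiset $\{1,\dots,k-1\}\cup\{1,\dots,n-k\}$, and conclude that the negative should sit in the middle with the positives assigned so that the largest values receive the smallest coefficients. Your write-up is somewhat more explicit than the paper's---you name the rearrangement inequality and justify the optimal choice of $k$ via the replacement $\mathcal{C}_k\to\mathcal{C}_{k+1}$ and the reversal symmetry, whereas the paper simply asserts that ``putting the negative element in the middle is optimal because it results in the smallest coefficients''---but the underlying argument is identical.
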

\begin{restatable}[\appsymb]{theorem}{theooremprefsumsumweaknphard}\label{theoorem:prefsumsumweaknphard}
    The decision variant of \prefsumsumprob\ is \NP-complete, even if $S$ contains exactly one negative (positive) element.
\end{restatable}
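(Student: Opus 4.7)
I would approach this by a Karp reduction from a weakly \NP-hard source problem --- most naturally \textsc{Partition} or \textsc{Subset-Sum} --- into the restricted variant of \prefsumsumprob\ where all but one element is positive. The intuition is that, with a single negative element, any permutation implicitly chooses both a partition of the positives into a before-set and an after-set, and an ordering of each; if the numerical values are carefully designed, the implicit partition must solve an underlying subset-sum-style decision. Concretely, starting from an instance $b_1, \dots, b_m$ with $\sum b_i = 2T$ (WLOG $m$ even), I would construct
\[
S \;:=\; \{M + b_1,\, M + b_2,\, \dots,\, M + b_m,\, -(\tfrac{m}{2} M + T)\},
\]
possibly augmented with polynomially-many auxiliary positives, where $M$ is a polynomially-bounded parameter and $-N$ with $N := \tfrac{m}{2} M + T$ is the unique negative element of $S$.

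The first key step is a structural lemma: for $M$ large enough, every optimum permutation places $-N$ at position $\tfrac{m}{2}+1$, so exactly $\tfrac{m}{2}$ of the $(M + b_i)$'s precede it. This follows since any other position $p$ yields $A - N = (p - 1 - \tfrac{m}{2})\,M + (s - T)$ where $A$ is the sum of preceding elements and $s$ is the corresponding $b$-sum, giving $|P_p| \geq M - 2T = \Omega(M)$; this already dominates an explicit balanced upper bound of $O(m^2 M)$ on the optimum. Given this structural constraint, sorting each side ascending is optimal by an exchange argument, and the total cost decomposes in closed form as $\Phi(M,m) + g_b(C) + g_b(D) + h(s,T)$, where $C, D$ denote the before- and after-subsets of $b$-values, $\Phi$ depends only on $M, m$ and any auxiliary padding, $g_b(X) := \sum_j(|X|-j+1)\,b_{X,(j)}$ is the $b$-weighted shape cost, and $h(s,T)$ is a penalty vanishing at $s = T$. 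For a yes-instance, selecting $C$ as one side of a \textsc{Partition} gives $s = T$ and annihilates $h$; for a no-instance every achievable $s$ satisfies $|s - T| \geq 1$.

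The \textbf{main obstacle} is that $h$ is not outright monotone in $|s-T|$: an explicit expansion shows that for $s < T$ the shifted negative $P_p = s - T$ interacts with the ascending positive after-prefix sums to give an effective cost $g(\text{bef}) + g(\text{aft}) - (|D| - 1)(T - s)$, which can be strictly smaller than the $s = T$ cost of the ``correct'' partition. Thus a naive construction cannot separate yes- and no-cases, because both sides of the reduction may prefer $s < T$ and the shape-cost variation across partitions may dominate the $h$-penalty. To resolve this, I would insert polynomially-many auxiliary padding positives of a carefully tuned magnitude, forced into the after-set at the optimum by a second dominance argument; they are arranged so that the sign pattern of the after-prefix sums that enables the $(|D|-1)(T-s)$ cancellation is broken (for instance by ensuring a guaranteed positive-to-positive transition whose absolute values strictly grow when $s$ decreases), while the variation of $g_b(C) + g_b(D)$ across all admissible partitions remains strictly below the resulting penalty. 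Establishing this delicate numerical balance is the technical crux; once achieved, a polynomially-computable threshold $W$ can be extracted from the closed-form cost so that $\min \mathrm{cost} \leq W$ iff the \textsc{Partition} instance is a yes-instance. The parallel case of exactly one \emph{positive} element follows by negating all values, and \NP-membership is immediate.
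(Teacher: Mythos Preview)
Your high-level plan matches the paper's: reduce from a \textsc{Partition}-style problem using a single large negative element and argue that any optimum has a rigid shape from which the underlying partition can be read off. You also correctly put your finger on the real difficulty --- the shape cost $g_b(C)+g_b(D)$ varies across partitions and can swamp the $|s-T|$ penalty, so the naive construction does not separate yes- and no-instances.

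However, the proposal has a genuine gap at exactly this point. Your fix (``insert polynomially-many auxiliary padding positives of a carefully tuned magnitude, forced into the after-set'') is not a construction but a wish; you neither specify the padding nor prove that it simultaneously (i) is forced into one side, (ii) destroys the $(|D|-1)(T-s)$ cancellation, and (iii) keeps the residual variation of $g_b$ below the penalty. This is precisely the hard part, and the paper's solution is substantially different from anything you sketch. The paper reduces from \oneintwopartition\ (a structured \textsc{Partition} where the items come in pairs and each side must take exactly one from each pair), and builds \emph{three} groups of positives $S_1,S_2,S_3$ together with a negative $-(M_1+M_2)$ so that the optimum has a three-phase shape (up--down--up) rather than your two-phase one. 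Two devices then replace your unspecified padding: first, the positives are spread over disjoint intervals $[imM,(i+1)mM)$, which forces the optimum to place exactly one element of each triple $t_i$ at each of the three positions carrying coefficient $m-i+1$; second, the $S_3$ elements are $\equiv 1\pmod m$ while $S_1\cup S_2$ are $\equiv 0\pmod m$, and the last phase must sum to the multiple-of-$m$ value $M_1$, so no $S_3$ element can appear there. What remains is exactly the \oneintwopartition\ choice of one of $\{S_1,S_2\}$ per triple. Your proposal, as written, does not supply any comparable mechanism and therefore does not constitute a proof.
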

We dedicate the rest of the section to show both results, starting with \cref{theorem:prefsumsumpoly}.

\ifshort
\begin{proof}[Proof sketch of \cref{theorem:prefsumsumpoly}]
    We show the result for $S$ containing a single negative element $x$, the result for a single positive element is shown equivalently.
    Now consider some permutation $\pi$ of $S$, and let $k=\text{pos}_\pi(x)$. We observe that the objective value can be split into absolute prefix sums before and after position $k$ as follows.
    \begin{align*}
        \sum_{i=1}^n|\sum_{j=1}^{i}\pi(j)|&=\sum_{i=1}^{k-1}\sum_{j=1}^i\pi(j)+\sum_{i=1}^{n-k}\sum_{j=1}^i\pi(n-j+1)\\
        &=\sum_{i=1}^{k-1}\pi(i)\cdot (k-i)+\sum_{i=1}^{n-k}\pi(n-i+1)\cdot (n-k-i+1).
    \end{align*}
     Thus, we obtain a weighted sum of the positive elements. The coefficient of an element depends only on its position in relation to the position of the negative element.
    This leads to a simple algorithm, which is also given as pseudocode in the long version of this paper: The permutation is chosen such that the negative element is in the middle. The smallest positive elements are placed at positions $1$ and $n$, the next smallest at positions $2$ and $n-1$, etc.
\end{proof}
\fi
\appendixproof{theorem:prefsumsumpoly}{
\ifshort\theoremprefsumsumpoly*\fi
\begin{proof}
\begin{algorithm}[htb]
\caption{Exact algorithm for \prefsumsumprob\ when $S$ contains exactly one negative element and $\sum_{s\in S}s=0$.}\label{alg:easyalg}
    \KwIn{A multiset $S$ of integers, which contains exactly one negative element.}
    \KwOut{A permutation $\pi$ of $S$.}
    $\pi(\lfloor \frac{n}{2} \rfloor)\gets \min S$\;
    $B\gets S\setminus \min S$\;
    $i\gets 0$\;
    \ForEach{$s\in B$ in non-decreasing order}{
        \uIf{$i \equiv 1\pmod{2}$}{
            $\pi(\frac{i+1}{2})\gets s$\;
        }
        \Else{
            $\pi(n-\frac{i}{2})\gets s$\;
        }
        $i\gets i+1$\;        
    }
    \Return $\pi$;
\end{algorithm}
    We show the result for $S$ containing a single negative element $x$, the result for a single positive element is shown equivalently.
    Now consider some permutation $\pi$ of $S$, and let $k=\text{pos}_\pi(x)$. We observe that the objective value can be split into absolute prefix sums before and after position $k$ as follows.
    \begin{align*}
        \sum_{i=1}^n|\sum_{j=1}^{i}\pi(j)|&=\sum_{i=1}^{k-1}\sum_{j=1}^i\pi(j)+\sum_{i=1}^{n-k}\sum_{j=1}^i\pi(n-j+1)\\
        &=\sum_{i=1}^{k-1}\pi(i)\cdot (k-i)+\sum_{i=1}^{n-k}\pi(n-i+1)\cdot (n-k-i+1).
    \end{align*}
    Thus, we obtain a weighted sum of the positive elements. The coefficient of an element depends only on its position in relation to the position of the negative element. Moreover, at most two elements can have coefficient one, at most two can have two, and so on. Thus, it is beneficial to have the two largest positive elements at positions $k-1$ and $k+1$, the next largest at position $k-2$ and $k+2$, etc.
    This leads to the simple \cref{alg:easyalg}. The permutation $\pi$ is chosen such that the negative element is in the middle. Then the smallest positive elements are placed at positions $1$ and $n$, the next smallest at positions $2$ and $n-1$, etc. This is optimal because in this way the largest positive elements are only counted towards the objective once, the next-largest two elements twice, and so on. Putting the negative element in the middle is optimal because it results in the smallest coefficients for the positive elements. The runtime is upper-bounded by sorting the numbers in $S$.
\end{proof}
}

To show \cref{theoorem:prefsumsumweaknphard}, we reduce from the following problem.

\begin{problem}[\oneintwopartition]
    Given is a set $X$ containing $m$ pairs of positive integers. No integer appears twice in the union of all the integers. The question is: Are there two sets $X_1,X_2$ of size $m$ such that both sets contain exactly one element from each pair in $X$, no element is contained in both sets, and $\sum_{x\in X_1} x=\sum_{x\in X_2} x$?
\end{problem}
Clearly, \oneintwopartition\ is \NP-hard by a straightforward reduction from \probname{Partition} \cite{DBLP:books/fm/GareyJ79}: Let $Y=\{y_1,\dots,y_n\}$ be an instance of \probname{Partition}, asking for a subset $Y'\subseteq Y$ with $\sum_{y\in Y'} y'=\frac{1}{2}\sum_{y\in Y}y$. We can assume all integers in $Y$ to be positive.
Let $b=\sum_{y\in Y}y$. The reduced instance consists of the pairs $(i\cdot b+y_i, i\cdot b)$ for $i=1,\dots,n$. The correctness of this reduction is immediate. We are ready to prove \cref{theoorem:prefsumsumweaknphard}.

\ifshort
\begin{proof}[Proof sketch of \cref{theoorem:prefsumsumweaknphard}]
    Note that the proof is rather technical; to understand it in full detail, the interested reader should refer to the full description in full version of the paper.
     We reduce from \oneintwopartition. Consider an instance $X=\{p_1,p_2,\dots,p_m\}$ of \oneintwopartition, where $m>1$. Define $M=\sum_{(x,y)\in X}\frac{x+y}{2}$. We define the following three sets.
    \begin{align*}
        S_1&=\{m(x+iM)\mid i\in [m],p_i=(x,y)\}\\
        S_2&=\{m(y+iM)\mid i\in [m],p_i=(x,y)\}\\
        S_3&=\{imM+1\mid i\in [m]\}
    \end{align*}
    We continue with further definitions.
    \begin{align*}
        M_1&=\frac{\sum_{s\in S_1}s+\sum_{s\in S_2}s}{2}=mM+mM\sum_{i=1}^mi\\
        M_2&=\sum_{s\in S_3}s=m+mM\sum_{i=1}^mi\\
        S&=S_1\cup S_2\cup S_3\cup \{-(M_1+M_2)\}
    \end{align*}
    For each $(x,y)\in X$ there are unique values $x',y'\in S$ such that for some $i$, $x'=m(x+iM)$ and $y'=m(y+iM)$. We say that $x'$ is the \emph{correspondent} of $x$ and $y'$ is the \emph{correspondent} of $y$, and vice versa.
    We observe that  $S\setminus \{-(M_1+M_2)\}$ can be partitioned uniquely into sets $t_1,t_2,\dots, t_m$ such that for each $t_i$ we have that
    \begin{itemize}
        \item $t_i$ contains exactly three elements,
        \item $t_i$ contains one element from $S_1$, one from $S_2$, and one from $S_3$, 
        \item the correspondent of $\{x\}=S_1\cap t_i$ and the correspondent of $\{y\}=S_2\cap t_i$ form a pair in $X$, and
        \item each integer from $t_i$ is in the interval $[imM, (i+1)mM)$.\label{property:proofsketch2}
    \end{itemize}
    
    We set $S$ as the instance of \prefsumsumprob\ and claim that $X$ is a yes-instance of \oneintwopartition\  if and only if there exists a permutation $\pi$ of $S$ such that
    \begin{equation*}
    \sum_{i=1}^n|\sum_{j=1}^{i}\pi(j)|\le \sum_{i=1}^m (m-i+1)\sum_{x\in t_i}x.
    \end{equation*}

    We sketch the more interesting direction of the proof, assuming that $\pi$ is a solution achieving this objective, showing how to construct $X_1$ and $X_2$.
    The idea now is that $\pi$ will order the elements as shown in \cref{fig:prefdistribution}. Namely, the prefix  sum $P_{2m+1}$ will be zero, and the last $m$ elements will be ordered increasingly. Further, $\pi(2m+1+i)$ will be from set $t_i$ for each $i\in[n]$. This further implies that the last $m$ elements sum to $M_1$, which is a multiple of $m$. Together, this means that the last $m$ elements cannot contain any elements from $S_3$. Now let $X_1$ be the set of correspondents of the last $m$ elements in $\pi$, and let $X_2$ be the remaining elements from $X$. The above properties imply that $\sum_{x\in X_1}x=\sum_{x\in X_2}x$.
\end{proof}
\fi
\begin{figure}
    \centering
    \includegraphics{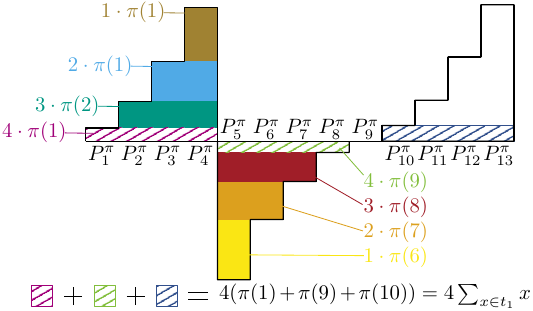}
    \caption{Schematization of the distribution of prefix sums for an optimal solution $\pi$ of $S$ in the proof of \cref{theoorem:prefsumsumweaknphard}, where $m=4$. The height corresponds to the prefix sum value, while values below the horizontal middle line are negative.
    The figure illustrates how the objective value (the area of the union of all enclosed regions) can be decomposed into a weighted sum of the positive elements. Further, the three smallest elements at positions $1,9$, and $10$ contribute to the sum with coefficient $4$, resulting in the hatched regions. In an optimal solution, these elements correspond to the three smallest values, constituting $t_1$.}
    \label{fig:prefdistribution}
\end{figure}
\appendixproof{theoorem:prefsumsumweaknphard}{
    \ifshort\theooremprefsumsumweaknphard*\fi
    \begin{proof}[Proof of \cref{theoorem:prefsumsumweaknphard}]
    We reduce from \oneintwopartition. Consider an instance $X=\{p_1,p_2,\dots,p_m\}$ of \oneintwopartition, where $m>1$. Define $M=\sum_{(x,y)\in X}\frac{x+y}{2}$, which we assume to be a multiple of $2$. We may further assume that all integers from $X$ are smaller than $M$; otherwise $X$ is a no-instance. We define the following three sets.
    \begin{align*}
        S_1&=\{m(x+iM)\mid i\in [m],p_i=(x,y)\}\\
        S_2&=\{m(y+iM)\mid i\in [m],p_i=(x,y)\}\\
        S_3&=\{imM+1\mid i\in [m]\}
    \end{align*}
    We continue with further definitions.
    \begin{align*}
        M_1&=\frac{\sum_{s\in S_1}s+\sum_{s\in S_2}s}{2}=mM+mM\sum_{i=1}^mi\\
        M_2&=\sum_{s\in S_3}s=m+mM\sum_{i=1}^mi\\
        S&=S_1\cup S_2\cup S_3\cup \{-(M_1+M_2)\}
    \end{align*}
    We observe the following crucial properties of the sets and constants defined above.
    \begin{enumerate} %
        \renewcommand{\labelenumi}{\textbf{\theenumi}}
        \renewcommand{\theenumi}{\mylipgray{(P\arabic{enumi})}}
        \item The sets $S_1,S_2$ and $S_3$ are pairwise disjoint.
        \item Let $A\subseteq S$ with $A\cap S_3\ne \emptyset$ and $\sum_{a\in A}a\equiv 0\pmod{m}$. Then $S_3\subseteq A$.\label{property:propS2}
        \item For each $(x,y)\in X$ there are unique values $x',y'\in S$ such that for some $i$, $x'=m(x+iM)$ and $y'=m(y+iM)$. We say that $x'$ is the \emph{correspondent} of $x$ and $y'$ is the \emph{correspondent} of $y$, and vice versa.
        \item $S\setminus \{-(M_1+M_2)\}$ can be partitioned uniquely into sets $t_1,t_2,\dots, t_m$ such that for each $t_i$ we have that
        \begin{itemize}
            \item $t_i$ contains exactly three elements,
            \item $t_i$ contains one element from $S_1$, one from $S_2$, and one from $S_3$, 
            \item the correspondent of $\{x\}=S_1\cap t_i$ and the correspondent of $\{y\}=S_2\cap t_i$ form a pair in $X$, and
            \item each integer from $t_i$ is in the interval $[imM, (i+1)mM)$.
        \end{itemize}
        Thus, notice that $x\in t_i$ and $y\in t_j$ with $i<j$ implies $x<y$.\label{property:propS3}
    \end{enumerate}
    We set $S$ as the instance of \prefsumsumprob\ and claim that $X$ is a yes-instance of \oneintwopartition\  if and only if there exists a permutation $\pi$ of $S$ such that
    \begin{equation}
    \sum_{i=1}^n|\sum_{j=1}^{i}\pi(j)|\le \sum_{i=1}^m (m-i+1)\sum_{x\in t_i}x.\label{eq:sumprefixprf2},
    \end{equation}
    where $n=|S|=3m+1$.
    Before showing both directions in detail, let us discuss the right side of \eqref{eq:sumprefixprf2}; we observe that the sum constitutes a weighted sum of only the positive elements in $S$. The three largest elements in $S$, corresponding to $t_m$, have coefficient one, the elements from $t_{m-1}$ have coefficient $2$, and so on (refer to \cref{fig:prefdistribution}). In fact, due to the structure of the instance, this is a lower bound for the objective value, which will be discussed later in more detail.
    We are ready to prove both directions.

    ``$\Rightarrow$'': Let $X_1$, $X_2$ be sets  witnessing that $X$ is a yes-instance of \oneintwopartition.
    Let $\pi_{S_3}$ be the list of numbers in $S_3$ in ascending order.
    We define $\pi$ as follows. 
    \begin{itemize}
        \item For $i=1,\dots,m$, $\pi(i)=\pi_{S_3}(i)$.
        \item For $i=m+1$, $\pi(i)=-(M_1+M_2)$.
        \item For $i=m+2,\dots, 2m+1$, $\pi(i)=a$, where $a$ is defined as follows. Let $j=i-m-1$. Consider $t_{m-j+1}$. Let $\{x\}=S_1\cap t_{m-j+1}$ and $\{y\}=S_2\cap t_{m-j+1}$. Let $x'$ be the correspondent of $x$ and $y'$ be the correspondent of $y$. We know that either $x'$ is in $X_1$ or $y'$ is in $X_1$. If $x'$ is in $X_1$, we define $a$ as $x$, otherwise as $y$.
        \item For $i=2m+2,\dots, 3m+1$, $\pi(i)=a$, where $a$ is defined as follows. Let $j=i-2m-1$. Consider $t_{j}$. Let $\{x\}=S_1\cap t_{j}$ and $\{y\}=S_2\cap t_{j}$. Let $x'$ be the correspondent of $x$ and $y'$ be the correspondent of $y$. We know that either $x'$ is in $X_2$ or $y'$ is in $X_2$. If $x'$ is in $X_2$, we define $a$ as $x$, otherwise as $y$.
    \end{itemize}
    As $\sum_{x\in X_1} x=\sum_{x\in X_2} x$, we have that $\sum_{i=m+1}^{2m+1}\pi(i)=M_2$. Due to $\sum_{i=1}^m\pi(i)=M_1$ and $\pi(m+1)=-(M_1+M_2)$,  it follows that $\sum_{i=1}^{2m+1}\pi(i)=0$. With this, it is easy to see (also refer to \cref{fig:prefdistribution}) that 
    \[\sum_{i=1}^n|\sum_{j=1}^{i}\pi(j)|=\sum_{i=1}^m (m-i+1)\sum_{x\in t_i}x.\]

    ``$\Leftarrow$'': Assume that $\pi$ is such that \eqref{eq:sumprefixprf2} holds. We first want to show that the prefix sums $P_i^\pi$ look as in \cref{fig:prefdistribution}.
    Let $k$ be such that $\pi(k)=-(M_1+M_2)$.
    First, we claim that $P^\pi_k$ cannot be non-negative, otherwise we would have the following.
    \begin{align}
        \sum_{i=1}^n|P^\pi_i|&\ge \sum_{i=1}^{k-1} (k-i)\pi(i)+\sum_{i=k+1}^n (n-i+1)\pi(i)\label{eq:coeffsubopt}\\
        &>\sum_{i=1}^m (m-i+1)\sum_{x\in t_i}x \label{eq:coeffopt}
    \end{align}
    The first inequality is obtained by decomposing the distribution of prefix sums into a weighted sum of the positive elements in $S$, similarly as in \cref{fig:prefdistribution}.
    For the second inequality, consider the coefficients of on the right side of \eqref{eq:coeffsubopt}. Each coefficient $1,2,3,\dots$ appears at most twice. As each positive element of $S$ occurs in the sum, some elements have coefficients larger than $m$.
    On the other hand, in \eqref{eq:coeffopt} each coefficient $1,2,3,\dots,m$ appears exactly three times. Further, coefficients are distributed optimally, i.e., coefficient $1$ is applied to the largest values in $t_m$, coefficient $2$ to the values in $t_{m-1}$, and so on. Thus, \eqref{eq:coeffopt} is clearly smaller.    
    Thus, we can observe that the sequence of $P^\pi_i$'s is non-negative for $i=0,\dots k-1$, negative for $i=k,\dots,\ell$ for some $\ell$, and lastly non-negative again for $i=\ell+1,\dots,n$. 
    
   We again consider the decomposition of $\sum_{i=1}^n|P^\pi_i|$ into a weighted sum of elements from $S^+=S\setminus \{-(M_1+M_2)\}$. More importantly, due to the aforementioned properties of the $P^\pi_i$'s, at most three elements from $S^+$ have coefficient one, at most three have coefficient $2$, and so on. The remaining elements all have larger coefficients $3,4,5,\dots$. The only exception is the element $\pi(\ell+1)$, as it might be that $P^\pi_\ell$ is negative and $P^\pi_{\ell+1}$ is positive. 
    For this case, the contribution of $\pi(\ell+1)$ to $\sum_{i=1}^n|P^\pi_i|$ can be expressed as $|P^\pi_\ell|\cdot (\ell-k+1)+|P^\pi_{\ell+1}|\cdot (n-\ell)$, with $|P^\pi_\ell|+|P^\pi_{\ell+1}|=\pi(\ell+1)$. Notice that in that case, some element from $S^+$, $|P^\pi_\ell|$, or $|P^\pi_{\ell+1}|$ have coefficient $m+1$.
    The obvious best distribution of coefficients (achieving the smallest sum of absolute prefix sums) is such that the three largest elements from $S^+$ get coefficient one, the next three smaller elements get coefficient $2$, and so on, with the largest coefficient being $m$.
    Notice that this is only possible when 
    \begin{enumerate}
        \renewcommand{\labelenumi}{\textbf{\theenumi}}
        \renewcommand{\theenumi}{\mylipgray{(P\alph{enumi})}}
        \item the first $m$ elements of $\pi$ are positive in increasing order,\label{propsredpi1}
        \item the next element is $-(M_1+M_2)$,\label{propsredpi2}
        \item the next $m$ elements being positive in decreasing order, with $P^\pi_{2m+1}=0$, \label{propsredpi3}
        \item the last $m$ elements being positive in increasing order, and\label{propsredpi4}
        \item due to the optimal choice of coefficients, this further implies that $\pi(k-1),\pi(k+1),\pi(n)$ are the largest three elements from $S^+$, $\pi(k-2),\pi(k+2),\pi(n-1)$ are the next-largest, and so on.\label{propsredpi5}
    \end{enumerate}   
    A solution with these properties is illustrated in \cref{fig:prefdistribution}.
    Now, we observe that this optimal choice of coefficients is exactly what is represented with the sum $\sum_{i=1}^m (m-i+1)\sum_{x\in t_i}x$. It follows, that in fact $\sum_{i=1}^m (m-i+1)\sum_{x\in t_i}x$ is a lower bound for $\sum_{i=1}^n|P^\pi_i|$, and \ref{propsredpi1}-\ref{propsredpi5} hold for $\pi$. Due to \ref{propsredpi5}, we have that $\{\pi(k-i), \pi(k+i),\pi(n-i+1)\}$ constitutes the triple $t_{m-i+1}$ for $i=1,\dots, m$. Let $A$ constitute the first $m$ positive elements of $\pi$, $B$ the next $m$ positive elements, and $C$ the last $m$ positive elements. As $P^\pi_{2m+1}=0$ we know that $\sum_{c\in C} c=M_1$, which is a multiple of $m$. Thus, $C\cap S_3=\emptyset$ by \ref{property:propS2}, and by the two facts $|C|=|S_3|=m$ and $M_1\ne M_2$. Hence, $C$ only consists of elements from $S_1\cup S_2$. Moreover, $C$, contains exactly one element from each triple $t_i$ due to \ref{property:propS3} and \ref{propsredpi5}.
    Now let $C'$ be the set of correspondents of elements in $C$. We have $\sum_{c\in C'} c=M$, $|C|=m$, and $C$ contains exactly one element from each pair in $X$. Thus, by defining $X_1=C'$ and $X_2$ as the remaining integers from pairs in $X$, we have shown that $X$ is a yes-instance of \oneintwopartition.
\end{proof}
}
Note that a pseudo-polynomial algorithm for \prefsumsumprob\ cannot be ruled out for the case of one negative (positive) element, as we did not show strong \NP-hardness.

\section{\texorpdfstring{\pwigglemin}{} and \texorpdfstring{\wpwigglemin}{}}\label{section:pwigglemin}
In this section, we show hardness result for \pwigglemin\ and \wpwigglemin\ for arbitrary $p$. We present a reduction which has several complexity implications in \cref{sec:minlinarrred}.
\cref{sec:approx} contains results with regard to approximation lower bounds.

\subsection{A Reduction Implying NP-hardness}\label{sec:minlinarrred}
The reduction is from the well-known problem \minlinarrlong\ \cite{DBLP:books/fm/GareyJ79} to \pwigglemin. The problem is defined as follows.
\begin{problem}[\minlinarrlong\ (\minlinarr)]
    Given an undirected graph $G=(V,E)$ and an integer $k$, does there exist a permutation $\pi$ of $V$ such that \[\sum_{\{u,v\}\in E}|\text{pos}_\pi(u)-\text{pos}_\pi(v)|\le k?\]
\end{problem}
Consider a constant positive integer $p$. The reduction will work for any such $p$, and goes as follows. Let $(G,k)$ be an instance of \minlinarr\ with $V=\{v_1,\dots, v_n\}$ and $E=\{e_1,\dots,e_m\}$.
We construct a set $F=\{f_1,\dots,f_n\}$ of $(m+1)$-time series.
We define the time series inductively based on the time point, starting with the first time point: let $f_i(1)=m$ for all $i=1,\dots n$. For time point $j>1$ consider the edge $e_{j-1}=\{v_a,v_b\}$ with $a<b$. We define $f_i(j)$ as
\begin{equation}
    f_i(j)=\begin{cases}
        f_i(j-1)+1&\qquad\text{if }i=a,\\
        f_i(j-1)-1&\qquad\text{if }i=b,\\
        f_i(j-1)&\qquad\text{otherwise.}
    \end{cases}
\end{equation}
Now consider an arbitrary permutation $\pi^V$ of $V$. Let $\pi^F$ be a permutation of $F$ that is defined such that $\text{pos}_{\pi^V}(v_i)=\text{pos}_{\pi^F}(f_i)$ for all $i\in [n]$.
We have the following.
\begin{restatable}[\appsymb]{lemma}{lemmalinarreduction}\label{lemma:linarrreduction}
    It holds that
    \begin{equation}
        \sum_{i=1}^n\sum_{j=1}^m|W_{i,j}^{\pi^F}|^p=\sum_{\{u,v\}\in E}|\text{pos}_\pi(u)-\text{pos}_\pi(v)|.
    \end{equation}
\end{restatable}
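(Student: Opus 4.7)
The plan is to examine the wiggle value $W_{i,j}^{\pi^F}$ edge by edge and to observe that, in the constructed instance, each absolute wiggle value is either $0$ or $1$. Once this is established, the $p$-th power becomes irrelevant and the total sum reduces to a per-edge counting problem.

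First, I would note that for each $j \in [m]$ the construction yields $f_a(j+1) - f_a(j) = +1$ and $f_b(j+1) - f_b(j) = -1$, where $e_j = \{v_a,v_b\}$ with $a<b$, while $f_c(j+1) - f_c(j) = 0$ for every $c \notin \{a,b\}$. Substituting this into
\[
W_{i,j}^{\pi^F} \;=\; \sum_{k=1}^{i}\bigl(\pi^F(k)(j+1)-\pi^F(k)(j)\bigr),
\]
only the positions of $f_a$ and $f_b$ in $\pi^F$ contribute, so the sum evaluates to $+1$, $-1$, or $0$ depending on how many of $\text{pos}_{\pi^F}(f_a)$ and $\text{pos}_{\pi^F}(f_b)$ are at most $i$. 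In particular, $|W_{i,j}^{\pi^F}| \in \{0,1\}$, and therefore $|W_{i,j}^{\pi^F}|^p = |W_{i,j}^{\pi^F}|$ for every positive integer $p$.

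Next, for a fixed $j$, I would count the number of indices $i \in [n]$ with $|W_{i,j}^{\pi^F}| = 1$. This holds precisely when exactly one of $\text{pos}_{\pi^F}(f_a), \text{pos}_{\pi^F}(f_b)$ is at most $i$, i.e., when $i$ lies in a certain half-open interval between these two positions. A brief case distinction on which of the two positions is smaller shows that the count is exactly $|\text{pos}_{\pi^F}(f_a) - \text{pos}_{\pi^F}(f_b)|$, which by the choice of $\pi^F$ equals $|\text{pos}_{\pi^V}(v_a) - \text{pos}_{\pi^V}(v_b)|$.

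Finally, summing over $j \in [m]$ and matching each $j$ with its corresponding edge $e_j$ yields the claimed identity. I do not foresee a significant obstacle here; the only care required is to verify the telescoping in the first step and to correctly handle both orderings of the two positions in the half-open-interval count.
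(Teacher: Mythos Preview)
Your proposal is correct and follows essentially the same approach as the paper: both arguments observe that for each column $j$ the differences $\pi^F(k)(j+1)-\pi^F(k)(j)$ are $+1$, $-1$, and $0$ at exactly the positions of $f_a$, $f_b$, and all other time series, so $|W_{i,j}^{\pi^F}|\in\{0,1\}$ and the inner sum over $i$ collapses to $|\text{pos}_{\pi^F}(f_a)-\text{pos}_{\pi^F}(f_b)|$. Your write-up is in fact a bit more explicit than the paper's terse calculation (which splits the range $[n]$ into three intervals and writes the contributions as $0^p$, $1^p$, $0^p$), but the underlying idea is identical.
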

\appendixproof{lemma:linarrreduction}{
    \ifshort\lemmalinarreduction*\fi
    \begin{proof}
    The proof is due to a simple calculation.
    \begin{align*}
        \sum_{\{u_a,u_b\}\in E}|\text{pos}_\pi(u_a)-\text{pos}_\pi(u_b)|&=\sum_{\{u_a,u_b\}=:e_i\in E,a<b}(\sum_{i=1}^{a-1}0^p+\sum_{i=a}^{b-1}1^p+\sum_{i=b-1}^n0^p)\\
        &=\sum_{j=1}^m(\sum_{i=1}^{a-1}|W^{\pi^F}_{i,j}|^p+\sum_{i=a}^{b-1}|W^{\pi^F}_{i,j}|^p+\sum_{i=b}^n|W^{\pi^F}_{i,j}|^p)\\
        &=\sum_{i=1}^n\sum_{j=1}^m |W^{\pi^F}_{i,j}|^p \qedhere
    \end{align*}
\end{proof}
}

Thus, a solution $\pi^V$ with solution value $k$ w.r.t.\ \minlinarr\ corresponds to a solution $\pi^F$ with solution value $k$ w.r.t.\ \pwigglemin. Hence, hardness and inapproximability results of \minlinarr\ directly carry over to \pwigglemin. Further, the instance $F$ is balanced, so results also carry over to \wpwigglemin\ by \cref{lemma:wigglemintoweightedwigglemin}. We obtain the following by the \NP-hardness of \minlinarr~\cite{DBLP:books/fm/GareyJ79}. 
\begin{theorem}
    Let $p\ge 1$ be an arbitrary integer. The decision variants of \pwigglemin\ and \wpwigglemin\ are strongly \NP-complete.
\end{theorem}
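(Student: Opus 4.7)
The plan is to show that the reduction and lemma just established essentially give the theorem for free, modulo two bookkeeping checks: balancedness of the constructed instance and polynomial boundedness of the numeric values. I would begin by noting that \NP-membership is immediate for both problems, since given a permutation $\pi$, the objective can be evaluated in polynomial time.

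For \NP-hardness of \pwigglemin, I would take an arbitrary instance $(G,k)$ of \minlinarr, invoke the construction above to obtain the time-series instance $F=\{f_1,\dots,f_n\}$ on $m+1$ time points, and appeal to \cref{lemma:linarrreduction}: any permutation $\pi^V$ of $V$ and the corresponding $\pi^F$ of $F$ achieve exactly the same objective value. Hence $(G,k)$ is a yes-instance of \minlinarr\ if and only if $F$ admits a permutation with $\sum_{i,j}|W^{\pi^F}_{i,j}|^p\le k$. Since \minlinarr\ is strongly \NP-hard~\cite{DBLP:books/fm/GareyJ79} and the values $f_i(j)$ lie in $[m-n,m+n]\subset \mathbb{Z}_{\ge 0}$ (each edge step increases one series by $1$ and decreases another by $1$, starting from $m$), the data-point values are polynomially bounded in the input size, yielding strong \NP-hardness.

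For \wpwigglemin, I would transfer the above via \cref{lemma:wigglemintoweightedwigglemin}. The key observation is that the constructed $F$ is balanced: at every time point $j>1$ exactly one series increases by $1$ and exactly one decreases by $1$, so $\sum_{i=1}^n f_i(j)=\sum_{i=1}^n f_i(j-1)$. Thus \cref{lemma:wigglemintoweightedwigglemin} applies and produces an instance $F'$ of \wpwigglemin\ on $4(m+1)+1$ time points together with a constant $C$ such that $F$ has a solution of value $x$ iff $F'$ has a solution of value $Cx$. Since $C$ and the data-point values in $F'$ remain polynomially bounded, this transfer preserves strong \NP-hardness and we conclude that \wpwigglemin\ is strongly \NP-complete as well.

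The argument involves no real obstacle: the heavy lifting has already been done by \cref{lemma:linarrreduction} and \cref{lemma:wigglemintoweightedwigglemin}. The only point requiring care is verifying the balancedness of $F$ (needed to invoke the weighted reduction) and confirming that the numerical values produced by both reductions remain polynomial in the \minlinarr\ input size, which together with the strong \NP-hardness of \minlinarr\ delivers strong \NP-completeness for both wiggle problems simultaneously.
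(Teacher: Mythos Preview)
Your proposal is correct and follows essentially the same route as the paper: apply the \minlinarr\ reduction together with \cref{lemma:linarrreduction} to obtain strong \NP-hardness of \pwigglemin, then observe that the constructed instance $F$ is balanced and invoke \cref{lemma:wigglemintoweightedwigglemin} to carry the result over to \wpwigglemin. Your explicit verification of balancedness and polynomial boundedness of the numeric data are details the paper only states in passing, so your write-up is, if anything, slightly more thorough.
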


\subsection{Approximation Lower Bounds}\label{sec:approx}
In this section, we consider the complexity of approximating \pwigglemin\ and \wpwigglemin. Firstly, the reduction from the previous section implies two hardness results due to hardness results for \minlinarr\ shown Ambühl et al.~\cite{DBLP:journals/siamcomp/AmbuhlMS11} and Raghavendra et al.~\cite{DBLP:conf/coco/RaghavendraST12}.
\begin{theorem}
    Let $p\ge 1$ be an arbitrary integer. Let $\epsilon>0$ be an arbitrarily small constant. If there is a PTAS for \pwigglemin\  or for \wpwigglemin, then there is a (probabilistic) algorithm that decides whether a given SAT instance of size $n$ is satisfiable in time $2^{n^\epsilon}$.
\end{theorem}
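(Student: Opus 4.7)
The plan is to leverage the objective-preserving reduction from \minlinarr\ to \pwigglemin\ developed in \cref{sec:minlinarrred} (\cref{lemma:linarrreduction}), and then invoke the known subexponential-time lower bounds for approximating \minlinarr\ due to Ambühl, Mastrolilli, and Svensson~\cite{DBLP:journals/siamcomp/AmbuhlMS11}, strengthened by Raghavendra, Steurer, and Tulsiani~\cite{DBLP:conf/coco/RaghavendraST12}. These results state that the existence of a PTAS for \minlinarr\ would imply a (probabilistic) algorithm deciding \textsc{SAT} on $n$ variables in time $2^{n^\epsilon}$ for every constant $\epsilon>0$. So the task reduces to transporting an approximation algorithm for (Weighted-)$p$-\textsc{WiggleMin} into one for \minlinarr.

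For the unweighted case, I would proceed as follows. Given an instance $(G,k)$ of \minlinarr, construct the time-series set $F$ as in \cref{sec:minlinarrred}. By \cref{lemma:linarrreduction}, for every permutation $\pi^V$ of $V$ and the corresponding permutation $\pi^F$ of $F$, the $\minlinarr$ cost of $\pi^V$ equals $\sum_{i,j}|W^{\pi^F}_{i,j}|^p$. Since this identification is a bijection between permutations of $V$ and permutations of $F$, the optimal values coincide exactly, and any $\pi^F$ that $(1+\epsilon)$-approximates the \pwigglemin\ optimum yields a $\pi^V$ that $(1+\epsilon)$-approximates the \minlinarr\ optimum. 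Hence a PTAS for \pwigglemin\ gives a PTAS for \minlinarr, and the claim for \pwigglemin\ follows from the cited hardness result.

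For the weighted variant, I would first observe that the instance $F$ produced by the reduction in \cref{sec:minlinarrred} is balanced: by construction each step from $f_\cdot(j-1)$ to $f_\cdot(j)$ adds $+1$ to one series and $-1$ to another, leaving $\sum_i f_i(j)$ constant. Therefore \cref{lemma:wigglemintoweightedwigglemin} applies and yields a \wpwigglemin\ instance $F'$ together with a constant $C$ such that the \wpwigglemin-value of any permutation of $F'$ equals $C$ times the \pwigglemin-value of the corresponding permutation of $F$. Approximation ratios are preserved under such scaling, so a PTAS for \wpwigglemin\ yields a PTAS for \pwigglemin\ and in turn for \minlinarr.

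The only real subtlety is checking that no paragraph break between the two reductions inflates the approximation ratio; this is immediate because in both \cref{lemma:linarrreduction} and \cref{lemma:wigglemintoweightedwigglemin} the mapping between solutions is a bijection that multiplies the objective by a constant independent of the chosen permutation. Putting the three pieces together — the objective-preserving reduction, the balanced-to-weighted transfer, and the Ambühl--Mastrolilli--Svensson / Raghavendra--Steurer--Tulsiani inapproximability of \minlinarr\ under the Exponential Time Hypothesis for \textsc{SAT} — establishes the theorem for both problems simultaneously. I do not expect any serious obstacle beyond formally citing the exact statement from \cite{DBLP:journals/siamcomp/AmbuhlMS11,DBLP:conf/coco/RaghavendraST12}, whose conclusion is already phrased in the required $2^{n^\epsilon}$ form.
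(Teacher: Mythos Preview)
Your proposal is correct and follows essentially the same route as the paper: invoke \cref{lemma:linarrreduction} to transfer a PTAS for \pwigglemin\ to a PTAS for \minlinarr, note that the constructed instance is balanced so \cref{lemma:wigglemintoweightedwigglemin} handles the weighted case via a multiplicative constant, and then appeal to the cited inapproximability of \minlinarr. The paper does not spell out more than this either; one minor remark is that the $2^{n^\epsilon}$ conclusion comes specifically from Amb\"uhl--Mastrolilli--Svensson~\cite{DBLP:journals/siamcomp/AmbuhlMS11}, whereas~\cite{DBLP:conf/coco/RaghavendraST12} is the source for the separate Small-Set Expansion-based statement.
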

Note that it is widely believed that such an algorithm for \probname{SAT} is unlikely, thus it is unlikely that there exists a PTAS for both problems.
\begin{theorem}
    Let $p\ge 1$ be an arbitrary integer. Under the Small-Set Expansion Hypothesis \cite{DBLP:conf/stoc/RaghavendraS10}, there is no constant-factor approximation for \pwigglemin\ and \wpwigglemin.
\end{theorem}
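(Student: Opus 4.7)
The plan is to chain two approximation-preserving reductions, using as the starting point the result of Raghavendra, Steurer, and Tulsiani~\cite{DBLP:conf/coco/RaghavendraST12}, which states that under SSEH there is no polynomial-time constant-factor approximation for \minlinarr. I would then combine this with the reduction from \cref{sec:minlinarrred} and, for the weighted case, with \cref{lemma:wigglemintoweightedwigglemin}.

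First I would observe that the reduction from \minlinarr\ to \pwigglemin\ described in \cref{sec:minlinarrred} is not only polynomial-time but value-preserving: by \cref{lemma:linarrreduction}, the permutation $\pi^F$ associated with a permutation $\pi^V$ of $V$ has $p$-wiggle cost exactly equal to the linear arrangement cost of $\pi^V$, and this correspondence is a bijection between permutations of $V$ and of $F$. Hence the optima of the two instances coincide. Any polynomial-time $\alpha$-approximation for \pwigglemin\ (with $\alpha$ a constant) can therefore be used as a black box to $\alpha$-approximate \minlinarr, contradicting SSEH-hardness. This settles the \pwigglemin\ case.

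For \wpwigglemin, I would compose the above with \cref{lemma:wigglemintoweightedwigglemin}. For this I first need to check that the instance $F$ produced by the \minlinarr\ reduction is balanced; this is immediate because every time step either leaves all $f_i(j)$ unchanged or simultaneously increments one series and decrements another, so $\sum_i f_i(j)$ is independent of $j$. Thus \cref{lemma:wigglemintoweightedwigglemin} is applicable and yields an instance $F'$ of \wpwigglemin\ whose solutions are in bijection with those of $F$ with objective values scaled by a fixed constant $C$. Rescaling objective values by a constant preserves approximation ratios exactly, so a constant-factor polynomial-time approximation for \wpwigglemin\ would, via this two-step chain, yield one for \minlinarr.

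The only real obstacle is a bookkeeping one: verifying that both reductions run in polynomial time and that the produced numerical values remain polynomially bounded, so that the polynomial-time hypothesis on the approximation algorithm really transfers back. This is explicitly guaranteed by \cref{lemma:wigglemintoweightedwigglemin} and is transparent from the construction in \cref{sec:minlinarrred}, where data points are bounded by $m$ plus the number of time steps. Once this is noted, the result follows by the standard contrapositive argument.
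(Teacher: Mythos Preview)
Your proposal is correct and follows essentially the same route as the paper: the paper also derives the theorem directly from the value-preserving reduction of \cref{sec:minlinarrred} together with the SSEH-hardness of constant-factor approximating \minlinarr\ due to Raghavendra et al.~\cite{DBLP:conf/coco/RaghavendraST12}, and then transfers the result to \wpwigglemin\ via \cref{lemma:wigglemintoweightedwigglemin} after noting that the produced instance $F$ is balanced. Your write-up is simply more explicit about why the reductions are approximation-preserving and why the numerical bounds suffice, but the underlying argument is identical.
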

The Small-Set Expansion Hypothesis is a hypothesis that implies the Unique-Games Conjecture~\cite{DBLP:conf/stoc/Khot02a,DBLP:conf/stoc/RaghavendraS10}. It was introduced by Raghavendra and Steurer in 2010 \cite{DBLP:conf/stoc/RaghavendraS10} and has since received much attention, as this conjecture would imply some hardness and inapproximability results, including the non-existence of a constant-factor approximation for treewidth \cite{DBLP:journals/jair/WuAPL14}.

\subparagraph*{A known greedy heuristic.}
Next, we want to look at a known greedy heuristic called \mylipgray{BestFirst} for \wonewigglemin\ that has been applied in works on stacked area charts \cite{strungemathiesenAestheticsOrderingStacked2021,dibartolomeoThereMoreStreamgraphs2016}. Given an instance $F=\{f_1,\dots,f_n\}$ of \wonewigglemin, the heuristic iteratively builds the order $\pi$, starting from the empty order. At step $i=1,\dots,n$ of the heuristic, the partial ordering of length $i-1$ is extended by appending a not yet chosen time series to the end that has the smallest increase in the objective value. Such heuristics equivalently exist for \onewigglemin\ and \prefsumsumprob, we also call them \mylipgray{BestFirst}.
\begin{restatable}[\appsymb]{theorem}{thmbestfirst}\label{thm:bestfirst}
    The \mylipgray{BestFirst} heuristic has approximation factor at least $\Omega(\sqrt[3]{n})$ for \wonewigglemin, \onewigglemin, and \prefsumsumprob.
\end{restatable}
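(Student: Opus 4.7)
The plan is to first prove the bound for $\prefsumsumprob$ by exhibiting a family of hard instances, and then transfer it to $\onewigglemin$ via \cref{lemma:prefsumsumtowigglemin} and to $\wonewigglemin$ via \cref{lemma:wigglemintoweightedwigglemin}. For the latter reduction to apply, I will ensure that the constructed $\prefsumsumprob$ instance sums to zero, so that the induced $\onewigglemin$ instance is balanced.

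I would define, for each large integer $k$, a multiset $S_k$ of $n=\Theta(k^3)$ real numbers that combines a large pool of small-magnitude ``bait'' elements with a handful of larger ``anchor'' elements. The values would be tuned so that the greedy rule of $\bestfirst$ strictly prefers the bait elements early on; consuming them drives the running prefix sum away from zero by $\Theta(k^2)$. Any subsequent placement of the anchor elements then forces an excursion of magnitude $\Theta(k^2)$ lasting $\Theta(k)$ steps, yielding a $\bestfirst$ objective of $\Omega(k^3)$. The corresponding step-by-step analysis of the greedy choices is the bulk of the lower-bound argument.

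To upper-bound the optimum, I would exhibit a witness ordering of $S_k$ that interleaves each anchor with just enough bait elements around it so that the drift is cancelled locally and every prefix sum stays bounded in magnitude by $\mathcal{O}(k)$; summing over $n$ positions (most of which contribute $\mathcal{O}(1/k)$ while only $\Theta(k^2)$ contribute $\Theta(1)$) gives a witness objective of $\mathcal{O}(k^2)$. Combined with the lower bound on $\bestfirst$, this gives a ratio of $\Omega(k)=\Omega(n^{1/3})$ for $\prefsumsumprob$, and the corresponding bounds for $\onewigglemin$ and $\wonewigglemin$ then follow from the cited reductions.

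The main obstacle will be the calibration of magnitudes and multiplicities in $S_k$: one must choose the element values so that simultaneously (i) $\bestfirst$ is provably forced into the claimed bad trajectory (independently of how ties are broken), and (ii) the witness ordering actually attains the claimed $\mathcal{O}(k^2)$ upper bound. Both tasks reduce to local case analyses of the greedy choices made by $\bestfirst$ and of the prefix sums of the witness, but the two analyses interact through the values chosen for the anchors, and it is this interaction that must be controlled carefully in the construction.
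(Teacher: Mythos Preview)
Your plan for \prefsumsumprob\ is essentially the paper's approach: the paper takes as bait $2m$ copies of $1$ and as anchors two copies of $m^2-m$ together with one copy of $-2m^2$, and then uses $m^2$ copies of this block so that $n=\Theta(m^3)$; \bestfirst\ consumes the $1$'s first and pays $\Omega(m^5)$, while the interleaved witness pays $\mathcal{O}(m^4)$, giving the $\Omega(m)=\Omega(n^{1/3})$ ratio. Your sketch with baits, anchors, and a calibrated witness is the same idea at a coarser level of detail. The transfer to \onewigglemin\ via \cref{lemma:prefsumsumtowigglemin} is also fine, since that lemma yields $f(2)-f(1)=s$ and hence \bestfirst\ on the two instances makes literally the same choices.

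The genuine gap is your transfer to \wonewigglemin\ via \cref{lemma:wigglemintoweightedwigglemin}. That lemma relates \emph{objective values} of permutations; it says nothing about the trajectory of \bestfirst. To carry the lower bound across you would have to argue that \bestfirst\ on the reduced weighted instance makes the same (or equally bad) greedy choices, and this is not automatic: in \wonewigglemin\ the marginal cost of appending a time series depends on that series' own data values through the weight factor, not only on the resulting wiggle. The paper sidesteps this by constructing the \wonewigglemin\ instance directly on two time points with $f(1)=m^2-s/2$ and $f(2)=m^2+s/2$, so that $f(1)+f(2)=2m^2$ is identical across all time series; the weight factor is then uniform and \bestfirst\ is oblivious to it, reducing the analysis to the unweighted case. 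Your route through \cref{lemma:wigglemintoweightedwigglemin} can in fact be salvaged (the averaging in that construction makes the incremental weight sum to $2M$ independently of the appended element), but you have not supplied this verification, and the direct two-time-point construction is both simpler and what the paper actually does.
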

\appendixproof{thm:bestfirst}{
    \ifshort\thmbestfirst* \fi
    \begin{proof}
    We first present instances, that show the result for \prefsumsumprob. In the end of the proof we show how, from these instances, instances for \onewigglemin\ and \wonewigglemin\ can be constructed.
    Let $m$ be some positive multiple of $2$.
    Consider the multiset $T$ consisting of
    \begin{itemize}
        \item $2m$ of the number $1$,
        \item two of the number $m^2-m$, and
        \item one of the number $-2m^2$.
    \end{itemize}
    Let $S$ consist of $m^2$ copies of $T$.
    We notice that $n=|S|=3m^2+2m^3$.
    Let $\pi$ be the solution that is obtained when applying \mylipgray{BestFirst} to the instance $S$. The heuristic will always append the number that achieves the smallest absolute prefix sum in the next step. Thus, the first $(m/2)(m^2+m+2)$ elements of $\pi$ consist of $m/2$ copies of the sequence
    \[(\overbrace{1,1,\dots,1}^{m^2\text{ ones}},-2m^2,m^2-m,\overbrace{1,1,\dots,1}^{m\text{ ones}}),\]
    whose sum is zero. We have that
    \begin{align*}
        \sum_{i=1}^n|P^\pi_i|&\ge \sum_{i=1}^{m/2(m^2+m+2)}|P^\pi_i|\\
        &=(m/2)\left(\sum_{i=1}^{m^2}i+m^2+m+\sum_{i=1}^mi\right)\\
        &=(m/2)\left(\frac{m^2(m^2+1)}{2}+m^2+m+\frac{m(m+1)}{2}\right)\\
        &=\Omega(m^5)
    \end{align*}
    Now consider the solution $\pi^*$ to the instance $S$ which consists of $m^2$ copies of
    \[(\overbrace{1,1,\dots,1}^{m\text{ ones}},m^2-m,-2m^2,m^2-m,\overbrace{1,1,\dots,1}^{m\text{ ones}}).\]
    We have
    \begin{align*}
        \sum_{i=1}^n|P^{\pi^*}_i|&=m^2\left(\sum_{i=1}^mi+m^2+m^2+m+\sum_{i=1}^mi\right)\\
        &=m^2\left(2\frac{m(m+1)}{2}+2m^2\right)\\        
        &=\mathcal{O}(m^4).
    \end{align*}
    So for such instances, the approximation ratio of the \mylipgray{BestFist} for \prefsumsumprob is at least $\Omega(m)=\Omega(\sqrt[3]{n})$.

    Now, given such an instance $S$ of \prefsumsumprob, we can construct an instance $F$ for \wpwigglemin\ (and \pwigglemin) as follows. For each $s\in S$, we add to $F$ a $2$-time series $f$ such that $f(1)=m^2-s/2$ and $f(2)=m^2+s/2$. We observe that $f(2)-f(1)=s$ and that $(f(1)+f(2))/2$ is $m^2$, the latter observation is independent of $f$. Further, $F$ is balanced. As $f(2)+f(1)=2m^2$ independent of $f$, the instance $F$ behaves equivalently for \wonewigglemin\ and \onewigglemin\ with regard to \mylipgray{BestFirst}. Because for each $s\in S$ there exists an $f\in F$ with $f(2)-f(1)=s$ and vice versa, it also behaves equivalently to the instance $S$ of \prefsumsumprob, in the sense that it also implies the same lower bound on the approximation factor of \mylipgray{BestFirst}.
\end{proof}
}

\section{MILP and Experiments for \texorpdfstring{\wonewigglemin}{}}\label{section:experimental}
Here, we present a mixed-integer linear program (MILP) for \wonewigglemin\ and compare it to the state-of-the art heuristic from Mathiesen and Schulz~\cite{strungemathiesenAestheticsOrderingStacked2021}. This heuristic is different from the \bestfirst\ heuristic from the last section and will be explained in more detail later. The aim of this comparison is to determine how an exact approach for \wpwigglemin\ scales with respect to input size and to estimate how the heuristic of \cite{strungemathiesenAestheticsOrderingStacked2021} compares with an exact algorithm with respect to solution quality. Note that the heuristic we compare with is strictly better than \bestfirst\ as was demonstrated in an experimental evaluation by Mathisen and Schulz~\cite{strungemathiesenAestheticsOrderingStacked2021}.
\ifshort Due to space constraints, the description of the MILP can be found in in the full version.\fi
\toappendix{
\subsection{A Mixed-Integer Linear Program for \texorpdfstring{\wonewigglemin}{}}\label{section:milp}
In this section, we present an MILP that solves \wonewigglemin\  optimally. Let $F=\{f_1,\dots,f_n\}$ be an instance of \wonewigglemin\ on $\ell$ time points. For the description of the MILP, let $F_{\ne}^2=\{(f_i,f_j)\mid (f_i,f_j)\in F\times F, i\ne j\}$ and let $F_{\ne}^3=\{(f_i,f_j,f_k)\mid (f_i,f_j,f_k)\in F\times F\times F, i\ne j, j\ne k, i\ne k\}$.
The MILP makes use of the following variables.
\begin{itemize}
    \item A binary variable $x_{i,j}$ for each $(f_i,f_j)\in F_{\ne}^2$ which is $1$ if and only if $f_i$ comes before $f_j$ in~$\pi$.
    \item A real variable $y_{i,j}$ for each $i\in [n]$ and $j\in [\ell]$ which will correspond to the sum of time series at time point $j$ up to time series $f_i$.
    \item A real variable $z^{\text{abv}}_{i,j}$ for each $i\in [n]$ and $j\in [\ell-1]$. This variable will correspond to the absolute wiggle between time points $j$ and $j+1$ for the upper border of time series $f_i$.
    \item A real variable $z^{\text{bel}}_{i,j}$ for each $i\in [n]$ and $j\in [\ell-1]$. This variable will correspond to the absolute wiggle between time points $j$ and $j+1$ for the lower border of time series $f_i$. 
\end{itemize}
The MILP is given below.
\begin{align}
    \min\qquad& \sum_{i=1}^n\sum_{j=1}^{\ell-1} \frac{f_i(j)+f_i(j+1)}{4}(z^{\text{abv}}_{i,j}+z^{\text{bel}}_{i,j})\tag{OBJ} \label{eq-obj}\\
    x_{i,j}=1-x_{j,i}&\qquad (f_i,f_j)\in F_{\ne}^2\tag{SYM}\label{eq-sym}\\
    x_{i,j}+x_{j,k}+x_{k,i}\le 2&\qquad(f_i,f_j,f_k)\in F_{\ne}^3\tag{TRANS}\label{eq-trans}\\
    y_{i,j}=f_i(j)+\sum_{k\in [n]\setminus\{i\}}x_{k,i}f_k(j)&\qquad i\in[n],j\in [\ell]\tag{FIXY}\label{eq-fixy}\\
    z^{\text{abv}}_{i,j}\ge y_{i,j}-y_{i,j+1}&\qquad i\in [n],j\in [\ell-1]\tag{Z1}\label{eq-Z1}\\
    z^{\text{abv}}_{i,j}\ge y_{i,j+1}-y_{i,j}&\qquad i\in [n],j\in [\ell-1]\tag{Z2}\label{eq-Z2}\\
    z^{\text{bel}}_{i,j}\ge (y_{i,j}-f_i(j))-(y_{i,j+1}-f_i(j+1))&\qquad i\in [n],j\in [\ell-1]\tag{Z3}\label{eq-Z3}\\
    z^{\text{bel}}_{i,j}\ge (y_{i,j+1}-f_i(j+1))-(y_{i,j}-f_i(j))&\qquad i\in [n],j\in [\ell-1]\tag{Z4}\label{eq-Z4}\\
    x_{i,j}\in \{0,1\}&\qquad (f_i,f_j)\in F_{\ne}^2\tag{BIN}\label{eq-bin}
\end{align}
The idea is to split the contributions to the objective of $f_i(j)$ and $f_i(j+1)$, $i\in [n]$, $j\in [\ell-1]$, into two parts -- the weighted wiggle $\frac{f_i(j)+f_i(j+1)}{4}z^{\text{abv}}_{i,j}$ above the time series $f_i$ between time points $j$ and $j+1$ and the weighted wiggle $\frac{f_i(j)+f_i(j+1)}{4}z^{\text{bel}}$ below. The objective \eqref{eq-obj} simply sums over all weighted wiggle values. The constraints \eqref{eq-sym} and \eqref{eq-trans} ensure that the $x$-variables encode a total order. \eqref{eq-fixy} sets the $y$-variables. \eqref{eq-Z1}-\eqref{eq-Z4} compute absolute values, \eqref{eq-Z1} and \eqref{eq-Z2} together with the objective set $z_{i,j}^{\text{abv}}=|y_{i,j}-y_{i,j+1}|$, \eqref{eq-Z3} and \eqref{eq-Z4} together with the objective value set $z^{\text{bel}}_{i,j}=|(y_{i,j}-f_i(j))-(y_{i,j+1}-f_i(j+1))|$. The number of constraints is cubic in $n$.}

\subsection{Experimental Setup}
\subparagraph*{Tested algorithms.}
We compared two algorithms for \wonewigglemin. The first, \MILP, is an implementation of the MILP\iflong from \cref{section:milp}\fi. The second, \Upwards, is the best state-of-the art algorithm from \cite{strungemathiesenAestheticsOrderingStacked2021}. It iteratively performs \emph{moves} that improve the objective value. In a single move, a time series is removed from the current ordering and reinserted into the position that is best with regard to the objective value.

\subparagraph*{Instances.} 
We obtained instances from real-world data, and these instances were also used in \cite{strungemathiesenAestheticsOrderingStacked2021}. For example, the eight instances from \cite{strungemathiesenAestheticsOrderingStacked2021} include unemployment rates of countries for a span of months, Covid values of countries for a span of days, and movie revenues for a span of weeks. These eight instances are very large, with 33--1,000 time series and 33--243 time points. We cannot expect that such instances can be solved to optimality by \MILP, thus we sampled from these instances sub-instances as follows. For each instance $I$ and each $k=10,15,\dots,60$, we picked uniformly at random $k$ time series from $I$ (if $I$ contains at least $k$ time series). These $k$ time series then constitute an instance. We did this five times for each combination of $k$ and $I$, resulting in 465 instances overall. 
\subparagraph*{Hardware and setup.}
Both algorithms were implemented in Python 3.12.4. We used the original implementation of \Upwards\ from \cite{strungemathiesenAestheticsOrderingStacked2021} with the same parameters. \MILP\ was implemented using the Python interface of Gurobi, using Gurobi v11.0.2. Due to large floating point values in the computations, the \texttt{NumericFocus} parameter of Gurobi was set to~3. The \texttt{MIPGap} parameter was set to~0, in order to find optimal solutions. Without this setup we ran into numeric issues in preliminary experiments.
The experiments were run on a cluster using Intel Xeon E5-2640 v4, 2.40GHz 10-core processors, running Ubuntu 18.04.6 LTS. To simulate an end-user machine, the memory limit was set to 8GB and each algorithm was executed on a single thread. The time limit was set to one hour. 
\subsection{Results}
\begin{figure}
    \begin{subfigure}[b]{0.4\textwidth}
        \includegraphics{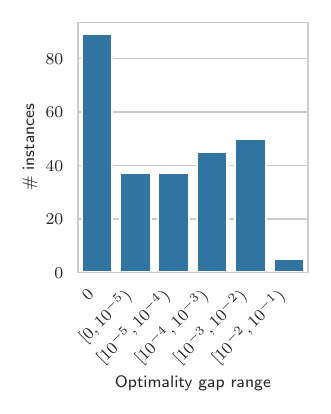}
        \caption{Number of instances by the optimality gap achieved by \Upwards.}
        \label{fig:gap}
    \end{subfigure}
    \hfill
    \begin{subfigure}[b]{0.59\textwidth}
        \includegraphics{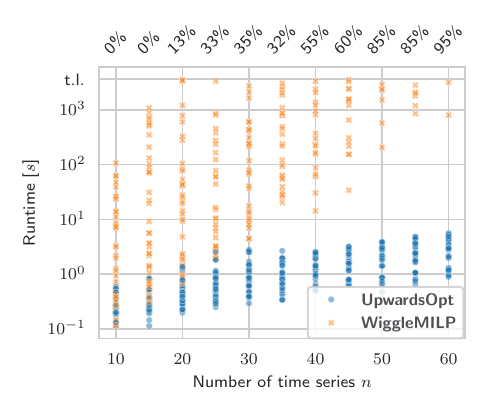}
        \caption{Scatter plot of runtimes by the number of time series $n$. The numbers at the top represent the percentage of instances with the respective value $n$ that timed out for \MILP.}
        \label{fig:runtime}
    \end{subfigure}
    \caption{Results of the experiments.}
\end{figure}
In this section, we answer two questions: first, how does \Upwards\ compare to \MILP\ with respect to the solution quality; second, how scalable are both approaches with respect to the size of an instance. For the first question, we only considered 263 out of 465 instances where \MILP\ produced the optimal solution -- it neither timed out nor ran into the memory limit. For each instance where \MILP\ produced the optimal solution, we compute the \emph{optimality gap} $g$ that is defined as $g=\text{sol}_{\text{\Upwards}}/\text{sol}_{\text{\MILP}}-1$, where $\text{sol}_x$ is the solution value of the algorithm $x$. Thus, a value of zero corresponds to \Upwards\ computing the optimal solution and larger values represent ``how far away'' \Upwards\ is from the optimum. \cref{fig:gap} shows the distribution of optimality gaps. We observe that \Upwards\ could solve 89 out of the 263 instances optimally, while the remaining optimality gaps are relatively low. This confirms that that \Upwards\ is a good heuristic for \wpwigglemin\ for real-world instances.

Next, in \cref{fig:runtime} we present a scatter plot of runtimes. The $y$-axis shows runtime in a logscale. The $x$-axis corresponds to the instance size in terms of the number of time series. Each point in the plot corresponds to an instance-algorithm combination. The values at the top denote the percentage of instances with respective value $n$ that timed out for \MILP\ (the memory limit was never reached). There were no timeouts for \Upwards. We observe that \Upwards\ is relatively fast, while \MILP\ runs for much longer and times out already for some instances with $20$ time series.

Generally, we conclude that the state-of-the art heuristic \Upwards\ is well-suited for \wonewigglemin\ for real-world instances, even though the problem is computationally very hard in the theoretical sense.

\section{Conclusion and Open Problems}
We have investigated variants of wiggle minimization in stacked area charts from the theoretical side and showed computational lower bounds. Not only is it \NP-hard to minimize wiggle, but it is also unlikely that approximations with good approximation guarantees exist. Nonetheless, we could show in an experimental evaluation that an existing heuristic is very good at minimizing weighted wiggle for real-world instances. Due to this being the first theoretical work on minimizing wiggle in stacked area charts, there remain some open problems.
\begin{itemize}
    \item Are there constant-factor approximations for \prefsumsumprob?
    \item \sloppy{Are there stronger inapproximability results for \pwigglemin\ and \wpwigglemin, i.e., under the assumption that $\P\ne \NP$.}
    \item We think that improvements to \MILP\ are possible, and it might be interesting to investigate other exact  wiggle minimization approaches.
    \item Finally, our results might be extended to wiggle minimization in streamgraphs. It seems likely that the hardness results in this paper carry over to streamgraphs, though, we were not able to show that yet.
\end{itemize}
\bibliography{literature}

\begin{thebibliography}{10}

\bibitem{DBLP:journals/siamcomp/AmbuhlMS11}
Christoph Amb{\"{u}}hl, Monaldo Mastrolilli, and Ola Svensson.
\newblock Inapproximability results for maximum edge biclique, minimum linear arrangement, and sparsest cut.
\newblock {\em {SIAM} J. Comput.}, 40(2):567--596, 2011.
\newblock \href {https://doi.org/10.1137/080729256} {\path{doi:10.1137/080729256}}.

\bibitem{dibartolomeoThereMoreStreamgraphs2016}
Marco~Di Bartolomeo and Yifan Hu.
\newblock There is more to streamgraphs than movies: Better aesthetics via ordering and lassoing.
\newblock {\em Comput. Graph. Forum}, 35(3):341--350, 2016.
\newblock \href {https://doi.org/10.1111/CGF.12910} {\path{doi:10.1111/CGF.12910}}.

\bibitem{buSineStreamImprovingReadability2021}
Chuan Bu, Quanjie Zhang, Qianwen Wang, Jian Zhang, Michael Sedlmair, Oliver Deussen, and Yunhai Wang.
\newblock Sinestream: Improving the readability of streamgraphs by minimizing sine illusion effects.
\newblock {\em {IEEE} Trans. Vis. Comput. Graph.}, 27(2):1634--1643, 2021.
\newblock \href {https://doi.org/10.1109/TVCG.2020.3030404} {\path{doi:10.1109/TVCG.2020.3030404}}.

\bibitem{byronStackedGraphsGeometry2008}
Lee Byron and Martin Wattenberg.
\newblock Stacked graphs - geometry {\&} aesthetics.
\newblock {\em {IEEE} Trans. Vis. Comput. Graph.}, 14(6):1245--1252, 2008.
\newblock \href {https://doi.org/10.1109/TVCG.2008.166} {\path{doi:10.1109/TVCG.2008.166}}.

\bibitem{gitzenodo}
Alexander Dobler and Martin Nöllenburg.
\newblock On minimizing wiggle in stacked area charts.
\newblock Zenodo, 2025.
\newblock \href {https://doi.org/10.5281/zenodo.15745936} {\path{doi:10.5281/zenodo.15745936}}.

\bibitem{DBLP:books/fm/GareyJ79}
M.~R. Garey and David~S. Johnson.
\newblock {\em Computers and Intractability: {A} Guide to the Theory of NP-Completeness}.
\newblock W. H. Freeman, 1979.

\bibitem{greffardVisualizingSetMultiple2015}
Nicolas Greffard and Pascale Kuntz.
\newblock Visualizing a set of multiple time series with an aggregate stacked graph.
\newblock In Ebad Banissi, Mark W.~McK. Bannatyne, Fatma Bouali, Remo Burkhard, John Counsell, Urska Cvek, Martin~J. Eppler, Georges~G. Grinstein, Weidong Huang, Sebastian Kernbach, Chun{-}Cheng Lin, Feng Lin, Francis~T. Marchese, Chi~Man Pun, Muhammad Sarfraz, Marjan Trutschl, Anna Ursyn, Gilles Venturini, Theodor~G. Wyeld, and Jian~J. Zhang, editors, {\em Proc. International Conference on Information Visualisation (IV'2015)}, pages 68--74. {IEEE} Computer Society, 2015.
\newblock \href {https://doi.org/10.1109/IV.2015.23} {\path{doi:10.1109/IV.2015.23}}.

\bibitem{heOptimalLayoutStacked2022}
Yutian He and Hongjun Li.
\newblock Optimal layout of stacked graph for visualizing multidimensional financial time series data.
\newblock {\em Inf. Vis.}, 21(1):63--73, 2022.
\newblock \href {https://doi.org/10.1177/14738716211045005} {\path{doi:10.1177/14738716211045005}}.

\bibitem{huggett_multiple_1990}
Renée Huggett.
\newblock Multiple line graphs (2).
\newblock In {\em Graphs and {Charts}}, pages 43--46. Palgrave Macmillan UK, London, 1990.
\newblock \href {https://doi.org/10.1007/978-1-349-11245-6_10} {\path{doi:10.1007/978-1-349-11245-6_10}}.

\bibitem{DBLP:journals/ior/KellererKRW98}
Hans Kellerer, Vladimir Kotov, Franz Rendl, and Gerhard~J. Woeginger.
\newblock The stock size problem.
\newblock {\em Oper. Res.}, 46(3-Supplement-3):S1--S12, 1998.
\newblock \href {https://doi.org/10.1287/OPRE.46.3.S1} {\path{doi:10.1287/OPRE.46.3.S1}}.

\bibitem{DBLP:conf/stoc/Khot02a}
Subhash Khot.
\newblock On the power of unique 2-prover 1-round games.
\newblock In John~H. Reif, editor, {\em Proc. {ACM} Symposium on Theory of Computing (STOC'2002)}, pages 767--775. {ACM}, 2002.
\newblock \href {https://doi.org/10.1145/509907.510017} {\path{doi:10.1145/509907.510017}}.

\bibitem{DBLP:journals/orl/Li-Hui92}
Tsai Li{-}Hui.
\newblock Sequencing to minimize the maximum renewal cumulative cost.
\newblock {\em Oper. Res. Lett.}, 12(2):117--124, 1992.
\newblock \href {https://doi.org/10.1016/0167-6377(92)90073-C} {\path{doi:10.1016/0167-6377(92)90073-C}}.

\bibitem{strungemathiesenAestheticsOrderingStacked2021}
Steffen~Strunge Mathiesen and Hans{-}J{\"{o}}rg Schulz.
\newblock Aesthetics and ordering in stacked area charts.
\newblock In Amrita Basu, Gem Stapleton, Sven Linker, Catherine Legg, Emmanuel Manalo, and Petrucio Viana, editors, {\em Proc. Diagrammatic Representation and Inference (Diagrams'2021)}, volume 12909 of {\em Lecture Notes in Computer Science}, pages 3--19. Springer, 2021.
\newblock \href {https://doi.org/10.1007/978-3-030-86062-2\_1} {\path{doi:10.1007/978-3-030-86062-2\_1}}.

\bibitem{DBLP:conf/stoc/RaghavendraS10}
Prasad Raghavendra and David Steurer.
\newblock Graph expansion and the unique games conjecture.
\newblock In Leonard~J. Schulman, editor, {\em Proc. {ACM} Symposium on Theory of Computing (STOC'2010)}, pages 755--764. {ACM}, 2010.
\newblock \href {https://doi.org/10.1145/1806689.1806792} {\path{doi:10.1145/1806689.1806792}}.

\bibitem{DBLP:conf/coco/RaghavendraST12}
Prasad Raghavendra, David Steurer, and Madhur Tulsiani.
\newblock Reductions between expansion problems.
\newblock In {\em Proc. Conference on Computational Complexity (CCC'2012)}, pages 64--73. {IEEE} Computer Society, 2012.
\newblock \href {https://doi.org/10.1109/CCC.2012.43} {\path{doi:10.1109/CCC.2012.43}}.

\bibitem{DBLP:journals/jair/WuAPL14}
Yu~(Ledell) Wu, Per Austrin, Toniann Pitassi, and David Liu.
\newblock Inapproximability of treewidth and related problems.
\newblock {\em J. Artif. Intell. Res.}, 49:569--600, 2014.
\newblock \href {https://doi.org/10.1613/JAIR.4030} {\path{doi:10.1613/JAIR.4030}}.

\bibitem{DBLP:journals/scheduling/YuHL04}
Wenci Yu, Han Hoogeveen, and Jan~Karel Lenstra.
\newblock Minimizing makespan in a two-machine flow shop with delays and unit-time operations is {NP}-hard.
\newblock {\em J. Sched.}, 7(5):333--348, 2004.
\newblock \href {https://doi.org/10.1023/B:JOSH.0000036858.59787.C2} {\path{doi:10.1023/B:JOSH.0000036858.59787.C2}}.

\end{thebibliography}

\end{document}